\begin{document}
\sloppy
\newenvironment{proof}%
{\begin{trivlist}\item[\hspace*{\labelsep}{\it Proof.\/}]}%
{\hfill$\Box$\end{trivlist}}
\newtheorem{theo}{Theorem}
\newtheorem{lemma}[theo]{Lemma}
\newtheorem{prop}[theo]{Proposition}
\newtheorem{coro}[theo]{Corollary}
\newtheorem{obse}[theo]{Observation}
\newtheorem{claim}[theo]{Claim}
\newtheorem{rema}[theo]{Remark}
\newtheorem{defi}[theo]{Definition}
\newtheorem{conj}[theo]{Conjecture}
\newcommand{\qed}{\hfill\rule{1.8mm}{1.8mm}}
\newcommand{\rz}{\mbox{{\rm I$\!$R}}}
\newcommand{\nz}{\mbox{{\rm I$\!$N}}}
\newcommand{\eps}{\varepsilon}
\newcommand{\al}{\alpha}
\newcommand{\seq}[1]{\langle #1\rangle}
\newcommand{\flo}[1]{\lfloor #1 \rfloor}
\newcommand{\cei}[1]{\lceil  #1 \rceil}
\newcommand{\np}{\mbox{${\cal N\cal P}$}}
\newcommand{\p}{\mbox{${\cal P}$}}
\newcommand{\para}{\medskip\noindent}
\newcommand{\bbb}{{\cal B}}
\newcommand{\lll}{{\cal L}}
\newcommand{\iii}{{\cal I}}
\newcommand{\opt}{\mbox{\rm OPT}}
\newcommand{\head}[1]
{\markright{\hbox to 0pt{\vtop to 0pt{\hbox{}\vskip 3mm \hrule width
\textwidth \vss} \hss}{\sc #1}}}
\title{Simultaneous approximation for scheduling problems}

\author{Long Wan\thanks{cocu3328@163.com. Department of Mathematics, Zhejiang University, Hangzhou, 310027, China.}}
\date{}
\maketitle \baselineskip 17pt
\begin{abstract}
Motivated by the problem to approximate all feasible schedules by
one schedule in a given scheduling environment, we introduce in
this paper the concepts of strong simultaneous approximation ratio
(SAR) and weak simultaneous approximation ratio (WAR). Then we
study the two parameters under various scheduling environments,
such as, non-preemptive, preemptive or fractional scheduling on
identical, related or unrelated machines.

\vskip 2mm\noindent{\bf Keywords.} {scheduling; simultaneous
approximation ratio; global fairness}
\end{abstract}

\section{Introduction}
In the scheduling research, people always hope to find a schedule
which achieves the balance of the loads of the machines well. To
the end, some objective functions, such as minimizing makespan and
maximizing machine cover, are designed   to find a reasonable
schedule. Representative publications can be found in
\citet{GR66}, \citet{GR69}, \citet{DE82}, and \citet{CS92} among
many others. But these objectives don't reveal the global fairness
for the loads of all machines. Motivated by the problem to
approximate all feasible schedules by one schedule in a given
scheduling environment and so realizing the global fairness, we
present two new parameters: strong simultaneous approximation
ratio (SAR) and weak simultaneous approximation ratio (WAR).

Our research is also enlightened from the research on global
approximation of vector sets. Related work  can be found in
\citet{BH01}, \citet{GO01}, \citet{GO05}, \citet{KL01} and
\citet{KU06}. \citet{KL01} proposed the notion of the
coordinate-wise approximation for the fair vectors of allocations.
Based on this notion, \citet{KU06} introduced the definitions of
the global approximation ratio and the global approximation ratio
under prefix sums.

For a given instance $\mathcal{I}$ of a   minimization
  problem, we use $V(\mathcal{I})$ to denote the set of vectors
induced by all feasible solutions of $\mathcal{I}$. For a vector
$X= (X_1, X_2, \cdots, X_m)\in V(\mathcal{I})$, we use
$\overleftarrow{X}$ to denote the vector in which the coordinates
(components) of $X$ are sorted in non-increasing order, that is,
$\overleftarrow{X} =(X'_1, X'_2, \cdots, X'_m)$ is a resorting of
$(X_1, X_2, \cdots, X_m)$ so that $X'_1\geq X'_2\geq \cdots \geq
X'_m$. For two vectors $X,Y\in V(\mathcal{I})$, we write
$X\preceq_{c}Y$ if $X_i\preceq Y_i$ for all $i$. The global
approximation ratio of a vector $X \in V(\mathcal{I})$, denoted by
$c(X)$, is defined to be the infimum of $\alpha$ such that
$\overleftarrow{X}\preceq_{c}\alpha\overleftarrow{Y}$ for all
$Y\in V(\mathcal{I})$. Then the best global approximation ratio of
  instance $\mathcal{I}$ is defined to be
$c^*(\mathcal{I})=\inf_{X\in V(\mathcal{I})}c(X)$.  For a vector
$X\in V(\mathcal{I})$, we use $\sigma(X)$ to denote the vector in
which the $i$-th coordinate is equal to the sum of the first $i$
coordinates of $X$. We write $X\preceq_{s}Y$ if
$\sigma(\overleftarrow{X})\preceq_{c}\sigma(\overleftarrow{Y})$.
The global approximation ratio under prefix sums of a vector $X
\in V(\mathcal{I})$, denoted by $s(X)$, is defined to be the
infimum of $\alpha$ such that $X\preceq_{s}\alpha Y$ for all $Y\in
V(\mathcal{I})$. Then the best global approximation ratio under
prefix sums of instance $\mathcal{I}$ is defined to be
$s^*(\mathcal{I})=\inf_{X\in V(\mathcal{I})}s(X)$.

In the terms of scheduling, the above concepts about the global
approximation of vector sets can be naturally formulated as  the
 simultaneous approximation of scheduling problems. Let
$\mathcal{I}$ be an instance of a scheduling problem ${\cal P}$ on
$m$ machines $M_1, M_2, \cdots, M_m$, and let ${\cal S}$ be the
set of all feasible schedules of $\mathcal{I}$. For a feasible
schedule $S \in {\cal S}$, the \emph{load} $L^S_i$ of machine
$M_i$ is defined to be the time by which the machine finishes all
the process of the jobs and the parts of the jobs assigned to it.
The  $L(S)=(L^S_1, L^S_2, \cdots, L^S_m)$ is called the \emph{load
vector} of machines under $S$. Then $V(\mathcal{I})$ is defined to
be the set of all load vectors of instance $\mathcal{I}$. We write
$c(S) = c(L(S))$ and $s(S)= s(L(S))$ for each $S \in {\cal S}$.
Then $c^*(\mathcal{I})=\inf_{S\in {\cal S}} c(S)$ and
$s^*(\mathcal{I})=\inf_{S\in {\cal S}} s(S)$. The \emph{strong
simultaneous approximation ratio} of problem ${\cal P}$ is defined
to be $SAR({\cal P})=\sup_{\mathcal{I}}c^*(\mathcal{I})$, and the
\emph{weak simultaneous approximation ratio} of problem ${\cal P}$
is defined to be $WAR({\cal
P})=\sup_{\mathcal{I}}s^*(\mathcal{I})$.

A scheduling problem is usually characterized by the machine type
and the job processing mode. In this paper, the machine types under
consideration are identical machines, related machines and unrelated
machines, and the job processing modes under consideration are
non-preemptive, preemptive and fractional. Let
$\mathcal{J}=\{J_1,J_2,\cdots,J_n\}$ and
$\mathcal{M}=\{M_1,M_2,\cdots,M_m\}$ be the set of jobs and the set
of machines, respectively. The processing time of $J_j$ on $M_i$ is
$p_{ij}$. If $p_{ij}=p_{kj}$ for $i\neq k$, the machine type is
\emph{identical machines}. In this case $p_j$ is used to denote the
processing time of $J_j$. If $p_{ij}=\frac{p_j}{s_i}$ for all $i$,
the machine type is \emph{related machines}. In this case, $p_j$ is
called the standard processing time of $J_j$ and $s_i$ is called the
processing speed of $M_i$. If there is no restriction for $p_{ij}$,
the machine type is \emph{unrelated machines}. If each job must be
non-preemptively processed on some machine, the processing mode is
\emph{non-preemptive}. If each job can be processed preemptively and
can be processed on  at most one machine at any time, the processing
mode is \emph{preemptive}. If each job can be partitioned into
different parts which can be processed on different machines
concurrently, the processing mode is  \emph{fractional}. Each
machine can process at most one job at any time under any processing
mode.

Since we cannot avoid the worst schedule in which all jobs are
processed on a common machine, it can be easily verified that, under
each processing mode, $SAR({\cal P})= m$ for identical machines,
$SAR({\cal P})= (s_1+s_2+\cdots+s_m)/s_1$ for related machines with
speeds $s_1\geq s_2\geq\cdots\geq s_m$, and $SAR({\cal P})=+\infty$
for unrelated machines.

We then concentrate our research on the weak simultaneous
approximation ratio $WAR({\cal P})$ of the scheduling problems
defined above.  The main results  are demonstrated in table
\ref{table:1}.
\begin{table}[!hft]\label{table:1}
\centering
\begin{tabular}{|c|c|c|c|}
\hline &identical machines&related machines&unrelated machines\\
\cline{1-4} non-preemptive
processing&$1<{WAR}\leq\frac{3}{2}$&$\frac{\sqrt{m}+1}{2}\leq{WAR}\leq
\sqrt{m}$&$\frac{\sqrt{m}+1}{2}\leq{WAR}\leq\sqrt{m}$\\ \cline{1-4}
preemptive processing&$1$&$\frac{\sqrt{m}+1}{2}\leq{WAR}\leq
\sqrt{m}$&$\frac{\sqrt{m}+1}{2}\leq{WAR}\leq\sqrt{m}$\\ \cline{1-4}
fractional processing&$1$&$\frac{\sqrt{m}+1}{2}$&$\frac{\sqrt{m}+1}{2}\leq{WAR}\leq\sqrt{m}$\\
  \hline
\end{tabular}
\vspace{-0.3cm} \caption{The weak simultaneous approximation ratio
of various scheduling problems}
\end{table}

For convenience, we use  $P$, $Q$ and $R$ to represent identical
machines, related machines and unrelated machines, respectively,
and use $NP$, $PP$ and $FP$ to represent non-preemptive,
preemptive and fractional processing, respectively. Then the
notation $Pm(NP)$ represents the scheduling problem on $m$
identical machines under non-preemptive processing mode. Other
notations for scheduling problems can be similarly understood.

This paper is organizes as follows. In Section 2, we study the
weak simultaneous approximation ratio for scheduling on identical
machines. In Section 3, we study the weak simultaneous
approximation ratio for scheduling on related machines. In Section
4, we study the weak simultaneous approximation ratio for
scheduling on unrelated machines.

\section{Identical machines}

For problem $P2(NP)$, we have $s(S)=1$ for every schedule $S$ which
minimizes the makespan. So $WAR(P2(NP)) =1$. For problem $Pm(NP)$
with $m\geq 3$, the following instance shows that $WAR(Pm(NP)) >1$.
In the instance, there are $m$ jobs with processing time $m-1$,
$(m-1)(m-2)$ jobs with processing time $m$ and a big job with
processing time $(m-1)^2+r_m$, where
$r_m=\frac{\sqrt{(m^3-m^2-m-2)^2+4m(m-1)(m-2)}-(m^3-m^2-m-2)}{2}$.
It can be verified that $0<r_m<m-2$. Let $S$ be the schedule in
which the $m$ jobs with processing time $m-1$ are scheduled on one
machine, the big job with with processing time $(m-1)^2+r_m$ is
scheduled on one machine,  and the remaining $(m-1)(m-2)$ jobs with
processing time $m$ are scheduled on the remaining $m-2$ machines
averagely. Let $T$ be the schedule in which the big job is scheduled
on one machine together with a job of processing time $m-1$, and
each of the remaining machines has a job of processing time $m-1$
and $m-2$ jobs of processing time $m$. Then the makespan of schedule
$S$ is $m(m-1)$ and the $(m-1)$-th prefix sum of
$\overleftarrow{L(T)}$ is $m(m-1)^2-(m-2-r_m)$. Now consider an
arbitrary schedule $\varrho$. If the big job is scheduled on one
machine solely, then the $(m-1)$-th prefix sum of
$\overleftarrow{L(R)}$ is at least $m(m-1)^2$. Thus, by considering
the $(m-1)$-th prefix sums of $\overleftarrow{L(T)}$ and
$\overleftarrow{L(R)}$, we have $s(R) \geq
\frac{m(m-1)^2}{m(m-1)^2-(m-2-r_m)}=1+\frac{r_m}{m(m-1)}$. If the
big job is scheduled on one machine together with at least one other
job, then the makespan of schedule $R$ is at least
$(m-1)+(m-1)^2+r_m$. Thus, by considering the makespans of $S$ and
$R$, we have $s(R) \geq  1+\frac{r_m}{m(m-1)}$. It follows that
$WAR(Pm(NP)) \geq  1+\frac{r_m}{m(m-1)} >1$ for
 $m\geq3$.

To establish the upper of $WAR(Pm(NP))$, we first present a simple
but useful lemma.

\begin{lemma}\label{le:2}
Let $X,Y$ be two vectors of $n$-dimension and let $X',Y'$ be two
vectors of two-dimension. If $X\preceq_{s}Y$ and
$X'\preceq_{s}Y'$, then $(X, X')\preceq_{s}(Y,Y')$.
\end{lemma}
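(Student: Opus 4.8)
The plan is to reduce the claim to a single combinatorial identity describing how the sorted prefix sums of a concatenation decompose. For any vector $Z$ let $g_Z(k)$ denote the sum of the $k$ largest coordinates of $Z$ (equivalently, for $1\le k\le\dim Z$, the $k$-th coordinate of $\sigma(\overleftarrow{Z})$), with the convention $g_Z(0)=0$. Unwinding the definition of $\preceq_s$, the hypotheses say precisely that $g_X(k)\le g_Y(k)$ for all $k$ and $g_{X'}(k)\le g_{Y'}(k)$ for all $k$, while the desired conclusion $(X,X')\preceq_s(Y,Y')$ says precisely that $g_{(X,X')}(k)\le g_{(Y,Y')}(k)$ for every $k=1,\dots,n+2$.

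The key step is the identity
\[
g_{(X,X')}(k)=\max_{\substack{0\le i\le n,\ 0\le j\le 2\\ i+j=k}}\bigl(g_X(i)+g_{X'}(j)\bigr),
\]
valid for every $0\le k\le n+2$ (the index set is nonempty in this range: take $j=\min(k,2)$, $i=k-j$). To prove the inequality ``$\ge$'', observe that each admissible pair $(i,j)$ selects a set of $k$ coordinates of the concatenation $(X,X')$ --- the $i$ largest coordinates of $X$ together with the $j$ largest of $X'$ --- whose total is $g_X(i)+g_{X'}(j)$ and hence at most the maximum total $g_{(X,X')}(k)$ of any $k$ coordinates. To prove ``$\le$'', pick a set $C$ of $k$ coordinates of $(X,X')$ attaining $g_{(X,X')}(k)$, and let $i$ be the number of them lying in the $X$-block and $j=k-i$ the number lying in the $X'$-block; the $X$-part of $C$ sums to at most $g_X(i)$ and the $X'$-part to at most $g_{X'}(j)$, so $g_{(X,X')}(k)\le g_X(i)+g_{X'}(j)$, which is one of the terms on the right.

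With the identity in hand the lemma follows at once: for every admissible split $i+j=k$ the hypotheses give $g_X(i)+g_{X'}(j)\le g_Y(i)+g_{Y'}(j)$, and taking the maximum over this common index set yields $g_{(X,X')}(k)\le g_{(Y,Y')}(k)$; since $k$ was arbitrary, $\sigma(\overleftarrow{(X,X')})\preceq_c\sigma(\overleftarrow{(Y,Y')})$, i.e.\ $(X,X')\preceq_s(Y,Y')$. I do not anticipate a genuine obstacle here; the only point requiring a little care is the verification of the decomposition identity (and, implicitly, that the index set $\{(i,j):i+j=k,\ 0\le i\le n,\ 0\le j\le 2\}$ is nonempty for each $1\le k\le n+2$). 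It is worth noting that the argument never uses that the second pair of vectors has dimension exactly $2$, so the same proof gives $(X,X')\preceq_s(Y,Y')$ whenever $X\preceq_s Y$ and $X'\preceq_s Y'$ for vector pairs of arbitrary matching dimensions.
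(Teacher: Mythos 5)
Your proof is correct, but it takes a genuinely different route from the paper's. The paper proves the lemma by a direct case analysis tailored to the two-dimensional appendage: it counts how many of the coordinates of $X'$ (resp.\ $Y'$) fall among the first $k$ entries of the sorted concatenation, and checks the handful of possible combinations $(\delta(k,X'),\delta(k,Y'))\in\{(0,\cdot),(\cdot,0),(2,1),(1,2),\text{equal}\}$ one by one, using $x_1\le y_1$ and $x_1+x_2\le y_1+y_2$ in each case. You instead prove the structural identity that the top-$k$ sum of a concatenation is the maximum of $g_X(i)+g_{X'}(j)$ over splits $i+j=k$ (both directions of which you verify correctly: fixed-cardinality selections are dominated by the top-$k$ choice, and the optimal selection splits across the two blocks), and then the monotonicity of this maximum under the hypotheses gives the conclusion immediately. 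Your argument buys generality and brevity: as you note, it never uses that $X',Y'$ have dimension $2$, so it yields the statement for concatenations of vectors of arbitrary matching dimensions (the paper only ever needs the two-dimensional case, in the exchange arguments of Theorem~2, which is presumably why the authors settled for the ad hoc case check). The paper's proof buys nothing beyond being self-contained at the level of elementary coordinate comparisons; your decomposition identity is the cleaner and more reusable statement, and the one small point to keep explicit --- that the split indices automatically satisfy $0\le i\le n$, $0\le j\le 2$ and that $g_Z(0)=0$ covers the boundary splits --- you have already handled.
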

\begin{proof}
Suppose that $X'=(x_1, x_2)$ and $Y'=(y_1, y_2)$. Without loss of
generality, we may further assume that $x_1\geq x_2$ and $y_1\geq
y_2$. Then $x_1\leq y_1$ and $x_1+x_2 \leq y_1+y_2$. Let $Z_x =(X,
X')$ and $Z_y=(Y, Y')$. For $Z\in \{Z_x, Z_y\}$, we use
$(\overleftarrow{Z})_k$ to denote the $k$-th coordinate of
$\overleftarrow{Z}$, and use $|\overleftarrow{Z}|_k$ to denote the
sum of the first $k$ coordinates of $\overleftarrow{Z}$ for $1\leq
k \leq n+2$. Similar notations are also used for $X$ and $Y$.
Given an index $k$ with $1\leq k \leq n+2$, we use $\delta (k,
X')$ to denote the number of elements in $\{x_1, x_2\}$ included
in the first $k$ coordinates of $\overleftarrow{Z_x}$, and $\delta
(k, Y')$  the number of elements in $\{y_1, y_2\}$ included in the
first $k$ coordinates of $\overleftarrow{Z_y}$. Then $0\leq \delta
(k, X'), \delta (k, Y')\leq 2$.

If $\delta (k, X')=\delta (k, Y')$, then we clearly have
$|\overleftarrow{Z_x}|_k \leq |\overleftarrow{Z_y}|_k$.

If $\delta (k, X')=0$, then $|\overleftarrow{Z_x}|_k =
|\overleftarrow{X}|_k \leq |\overleftarrow{Y}|_k \leq
|\overleftarrow{Z_y}|_k$.

If $\delta (k, Y')=0$ and $\delta (k, X') \geq 1$, we suppose that
$x_1$ is the $i$-th coordinate of $\overleftarrow{Z_x}$. Then, for
each $j$ with $i\leq j\leq k$,  $(\overleftarrow{Z_x})_j \leq x_1
\leq y_1 \leq (\overleftarrow{Z_y})_j$. Consequently,
$|\overleftarrow{Z_x}|_k = |\overleftarrow{X}|_{i-1} +\sum_{i\leq
j\leq k} (\overleftarrow{Z_x})_j \leq |\overleftarrow{Y}|_{i-1}
+\sum_{i\leq j\leq k} (\overleftarrow{Z_y})_j =
|\overleftarrow{Z_y}|_k$.

If $\delta (k, X')=2$ and $\delta (k, Y')=1$, then
$(\overleftarrow{Y})_{k-1} \geq y_2$. Thus,
$|\overleftarrow{Z_x}|_k = |\overleftarrow{X}|_{k-2} +x_1 +x_2
\leq |\overleftarrow{Y}|_{k-2} +y_1+y_2 \leq
|\overleftarrow{Y}|_{k-1} +y_1  = |\overleftarrow{Z_y}|_k$.

If $\delta (k, X')=1$ and $\delta (k, Y')=2$, then
$(\overleftarrow{Y})_{k-1} \leq y_2$. Thus,
$|\overleftarrow{Z_x}|_k = |\overleftarrow{X}|_{k-1} +x_1
  \leq |\overleftarrow{Y}|_{k-1} +y_1  \leq  |\overleftarrow{Y}|_{k-2}
  +y_1 + y_2  =
|\overleftarrow{Z_y}|_k$.

The above discussion covers all possibilities. Then the lemma
follows.
\end{proof}

\begin{theo}
$WAR(Pm(NP)) \leq \frac{3}{2}$ for $m\geq 4$ and $WAR(P3(NP)) \leq
 \sqrt{5}-1\approx 1.236$.
\end{theo}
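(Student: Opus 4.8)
The plan is to exhibit, for an arbitrary instance $\mathcal{I}$ on $m$ machines, a single non-preemptive schedule $S$ whose sorted load vector dominates, in the prefix-sum order and up to the claimed factor $c_m$ (with $c_m=\tfrac32$ for $m\ge4$ and $c_3=\sqrt{5}-1$), the sorted load vector of every feasible schedule. Unwinding the definitions, $s(S)\le c_m$ amounts to: for every feasible $T$ and every $k\in\{1,\dots,m\}$, the $k$-th prefix sum of $\overleftarrow{L(S)}$ is at most $c_m$ times the $k$-th prefix sum of $\overleftarrow{L(T)}$. Write $\sigma_k(S)$ for the former. The case $k=m$ is free, since $\sigma_m(S)=\sum_jp_j=\sigma_m(T)$, so only $k<m$ matters.

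I would first record the lower bounds on $\sigma_k(T)$ valid for every $T$: (a) $\sigma_k(T)\ge P_k$, the sum of the $k$ largest processing times, because the $k$ busiest machines of $T$ jointly contain at least the $k$ largest jobs; (b) $\sigma_k(T)\ge\tfrac{k}{m}\sum_jp_j$, since the $k$ busiest machines carry at least the average; (c) $\sigma_k(T)$ is at least the $k$-th prefix sum of the leveled vector (the load vector that isolates every job exceeding the level $\mu$ on its own machine and spreads the rest evenly) — a fact one checks directly, or borrows from the fractional case; and (d) a pigeonhole/bin-packing refinement of these: if more than $k$ jobs exceed a threshold $\tau$, then in every $T$ two of them share a machine among the top $k$, which pushes $\sigma_k(T)$ above the leveled bound by a quantifiable amount. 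Bound (c) alone does not suffice — a level of $\mu$ can be unattainable by any integral packing — so (d) is what recoups the rounding loss.

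For the construction I would take $S$ to be the list schedule in non-increasing order of processing times (LPT) for large $m$, and a lexicographically balanced makespan-optimal schedule for small $m$ (the latter is certainly needed for $m\le3$: for $m=2$ every makespan-optimal schedule already gives $s(S)=1$, whereas LPT does not). To bound $\sigma_k(S)$, fix $k$ and inspect the $k$ busiest machines of $S$. Graham's inequality says each of them carries its last-assigned job on top of a prior load that was minimal at assignment time, hence at most the average $\tfrac1m\sum_jp_j$; moreover a machine holding at least two jobs has its last job at most half its own load. Splitting the $k$ machines into the $k_1$ single-job ones and the $k_2=k-k_1$ multi-job ones gives $\sigma_k(S)\le P_{k_1}+\tfrac{2k_2}{m}\sum_jp_j$, a bound that by itself only yields the factor $2$ against (a)--(b). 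The improvement to $c_m$ comes from the observation that whenever this estimate is near-tight there are many multi-job machines, hence many medium jobs, and then bound (d) (or the exact leveled bound plus a packing argument) forces $\sigma_k(T)$ up by the missing amount. Lemma~\ref{le:2}, together with its straightforward one-dimensional analogue, is the glue that makes this local: one verifies the prefix-sum inequality separately on the block of single-job machines, on two-machine blocks carved out of the multi-job part, and on any job large enough to be isolated, and then recombines — which is also why large-job-dominant instances reduce to fewer machines and why $m=2,3$ are genuine base cases.

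The main obstacle is precisely this last point: pushing the constant down from the easy $2$ to $\tfrac32$ uniformly in $k$, which needs the quantitative packing-type lower bounds on $\sigma_k(T)$ rather than the soft bounds (a)--(c), and a careful bookkeeping of the trade-off between $k_1$ and $k_2$. For $m=3$ one must additionally handle the base case by hand: there are only the indices $k=1,2$ and a bounded list of structural configurations of the critical schedule, and optimizing the worst-case trade-off between the makespan ratio and the two-prefix-sum ratio yields a quadratic — concretely $x^2+2x-4=0$ — whose positive root is $\sqrt{5}-1$; combined with the known near-optimality of LPT for makespan on three machines (ratio $\tfrac{11}{9}<\sqrt{5}-1$), this gives the sharper $c_3$.
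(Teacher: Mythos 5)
Your proposal is a strategy outline rather than a proof, and the step you yourself flag as ``the main obstacle'' --- getting from the easy factor $2$ down to $\tfrac32$ --- is exactly the step that is missing. Your bounds (a)--(c) on $\sigma_k(T)$ and your LPT estimate $\sigma_k(S)\le P_{k_1}+\tfrac{2k_2}{m}\sum_j p_j$ are fine (though the justification of (a) as stated is off: the $k$ busiest machines of $T$ need not contain the $k$ largest jobs; one should instead compare with the at most $k$ machines that do contain them), but the ``pigeonhole/bin-packing refinement'' (d) is never stated precisely, never proved, and no bookkeeping is carried out, so the claimed factor $\tfrac32$ is not established for any $k$. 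The paper closes this gap by a structurally different idea that your sketch does not contain: it defines the machine index $i_0$ (the first machine of the LPT schedule carrying either three jobs, or two jobs whose shorter one is at most half the longer), and proves --- via a sequence of exchange arguments glued together by Lemma~\ref{le:2} --- that the LPT schedule restricted to the jobs on $M_1,\dots,M_{i_0}$ is \emph{exactly} prefix-sum optimal, i.e.\ $s(T)=1$ for the truncated instance. This gives ratio $1$ on every prefix $i\le i_0$, and beyond $i_0$ the inequality $p_d\le\tfrac12(L^S_{i_0+1}-p_d)$ together with the Graham-type averaging bound yields $\sum_{j\le i}L^S_j\le Q+\tfrac32\cdot\tfrac{i-i_0}{m-i_0}(P-Q)$ against $\sum_{j\le i}L^{\bar S}_j\ge Q+\tfrac{i-i_0}{m-i_0}(P-Q)$, whence $\tfrac32$. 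No lower bound of type (d) on $\sigma_k(T)$ is needed; the exact optimality of LPT on the prefix of ``big-job'' machines is the crux, and it is absent from your argument.

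The $m=3$ case is likewise only asserted. You name a quadratic $x^2+2x-4=0$ with root $\sqrt5-1$ but give no derivation, and the appeal to LPT's $\tfrac{11}{9}$ makespan guarantee does not help: $WAR$ only requires the \emph{existence} of a good schedule, so one may use exact optimal schedules, and a makespan ratio bound says nothing about the second prefix sum, which is where the difficulty lies. The paper's argument takes \emph{two} candidate schedules --- the makespan minimizer $S$ and the machine-cover maximizer $T$ --- normalizes the total load to $1$, observes $s(S)=\frac{L^S_1+L^S_2}{L^T_1+L^T_2}$ and $s(T)=\frac{L^T_1}{L^S_1}$, and bounds $s^*({\cal I})\le\min\{\frac{2t}{1+x},\frac{2x}{t}\}$ with $x=1-2L^T_3$, $t=1-L^S_3\in[\tfrac23,1]$, whose worst case is $\frac{\sqrt{1+4t^2}-1}{t}\le\sqrt5-1$. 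Your single ``lexicographically balanced makespan-optimal schedule'' might or might not achieve this bound, but nothing in your sketch proves that it does; to repair the $m=3$ case you would either have to carry out such a two-schedule optimization or supply the missing case analysis for your chosen schedule.
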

\begin{proof}
Consider an instance of $n$ jobs on $m\geq 4$ identical machines
with ${\cal J}=\{J_1,J_2,\cdots,J_n\}$ and ${\cal
M}=\{M_1,M_2,\cdots, M_m\}$. We assume that $p_1\geq p_2\geq
\cdots\geq p_n$. Let $S$ be a schedule produced by LPT algorithm
(which is the LS algorithm with the jobs being given in the LPT
order) such that $L^S_1\geq L^S_2\geq \cdots\geq L^S_m$. Then
$L(S)=\overleftarrow{L(S)}=(L^S_1, L^S_2, \cdots, L^S_m)$. If
$n\leq m$, it is easy to verify that $s(S)=1$. Hence we assume in
the following that $n\geq m+1$. Then some machine has at least two
jobs in $S$.

Let $i_0$ be the smallest index such that either $M_{i_0+1}$ has at
least three jobs in $S$, or $M_{i_0+1}$ has exactly two jobs in $S$
and the size of the shorter job on $M_{i_0+1}$
  is at most half of the size of the longer job on $M_{i_0+1}$.
If there is no such index, we set $i_0=m$. Then $i_0\geq 0$, and in
the case $i_0\geq 1$, each of $M_1,M_2,\cdots,M_{i_0}$ has at most
two jobs in $S$.
 Let $J_k$ be the shortest job scheduled on
$M_1,M_2,\cdots,M_{i_0}$ and set ${\cal
J}_k=\{J_1,J_2,\cdots,J_k\}$. Then ${\cal J}_k$ contains the jobs
scheduled on $M_1,M_2,\cdots,M_{i_0}$. We use $M_{k'}$ to denote
the machine occupied by $J_k$ in $S$. Let $T$ be the schedule
derived from $S$ by deleting $J_{k+1},J_{k+2},\cdots,J_n$. Then
$T$ is an LPT-schedule for ${\cal J}_k$ with
$L^T_i=L^S_i,i=1,2,\cdots,i_0$. We claim that $s(T)=1$. In the
case $i_0=0$, the claim holds trivially. Hence, we assume in the
following that $i_0\geq 1$.

If  each of $M_1,M_2,\cdots,M_{i_0}$ has only one job in $S$, then
$i_0=k\leq m$ and it is easy to see that $s(T)=1$.

Suppose in the following that at least one of
$M_1,M_2,\cdots,M_{i_0}$ has exactly two jobs in $S$. Then $m+1\leq
k \leq 2m$ and the machine $M_{k'}$ has exactly two jobs, say $J_t$
and $J_k$, in $S$. Note that there are at most two jobs on each
machine in $T$. (Otherwise, some machine $M_i$ with $i\geq i_0+1$
has $r\geq 3$ jobs, say $J_{h_1},J_{h_2},\cdots,J_{h_r}$, in $T$. By
LPT algorithm, $p_t\geq\sum^{r-1}_{j=1}p_{h_j}\geq 2p_k$,
contradicting the choice of $i_0$.)  From the LPT algorithm, we have
$t= {2m+1-k}$. By the choice of $i_0$, we have
$p_k>\frac{1}{2}p_{2m+1-k}$.

Let $R$ be an arbitrary schedule for ${\cal J}_k$. If each machine
has at most two jobs in $R$, we set $R_1=R$.  If some machine $M_x$
has at least three jobs in $R$, by the pigeonhole principle, a
certain machine $M_y$ has either no job or exactly one job in
$\{J_{2m+1-k},J_{2m+2-k},\cdots,J_k\}$. Let $R'$ be the schedule
obtained from $R$ by moving the shortest job, say $J_{x'}$, on $M_x$
to $M_y$. Then $L^{R'}_x \geq 2p_k>p_{2m+1-k}\geq L^R_y$ and
$L^{R'}_y= L^{R}_y+ p_{x'}\geq L^R_y$. Note that $L^{R}_x \geq
L^{R'}_x , L^{R'}_y \geq L^{R}_y$ and $L^{R}_x + L^{R}_y = L^{R'}_x
+L^{R'}_y$. Then we have $L(R')\preceq_{s}L(R)$ by lemma \ref{le:2}.
This procedure is repeated until we obtain a schedule $R_1$ so that
each machine has at most two jobs in $R_1$. Then we have
$L(R_1)\preceq_{s}L(R)$.

If $J_1, J_2, \cdots, J_m$ are processed on distinct machines,
respectively, in $R_1$, we set $R_2= R_1$. If some machine $M_x$
has two jobs $J_{x'}, J_{x''}\in \{J_1, J_2, \cdots, J_m\}$ in
$R_1$, by the pigeonhole principle, a certain machine $M_y$ is
occupied by at most two jobs in $\{J_m, J_{m+1}, \cdots, J_k\}$.
Suppose that $p_{x'}\geq p_{x''}$ and $J_{y'}$ is the shorter job
on $M_y$. Let $R'_1$ be the schedule obtained from $R_1$ by
shifting $J_{x''}$ to $M_y$ and shifting $J_{y'}$ to $M_x$. Then
$L^{R_1}_x \geq L^{R'_1}_x , L^{R'_1}_y \geq L^{R_1}_y$ and
$L^{R_1}_x + L^{R_1}_y = L^{R'_1}_x +L^{R'_1}_y$. Consequently, by
lemma \ref{le:2}, $L(R'_1)\preceq_{s}L(R_1)$. This procedure is
repeated until we obtain a schedule $R_2$ so that $J_1, J_2,
\cdots, J_m$ are processed on distinct machines, respectively, in
$R_2$. Then we have $L(R_2)\preceq_{s}L(R_1)$.

Without loss of generality, we assume that $J_j$ is processed on
$M_j$ in $R_2$, $1\leq j\leq m$. Let $t= k-m$. Then the $t$ jobs
$J_{m+1}, J_{m+2}, \cdots, J_k$ are processed on $t$ distinct
machines in $R_2$. For convenience, we add another $m-t$ dummy jobs
with sizes 0 in $R_2$ so that each machine has exactly two jobs. We
define a sequence of $t$ schedules $R_2^{(1)}, R_2^{(2)}, \cdots,
R_2^{(t)}$ for ${\cal J}_k$ by the following way.

Initially we set $R_2^{(0)}=R_2$. For each $i$ from 1 to $t$, the
schedule $R_2^{(i)}$ is obtained from $R_2^{(i-1)}$ by exchanging
the shorter job on $M_{m-i+1}$ with job $J_{m+i}$.

We only need to show that $L(R_2^{(i)})\preceq_{s}L(R_2^{(i-1)})$
for each $i$ with $1\leq i\leq t$. Note that the jobs $J_{m+1},
J_{m+2}, \cdots, J_{m+i-1}$ are processed on machines $M_m, M_{m-1},
\cdots, M_{m-i+2}$, respectively, in $R_2^{(i-1)}$. If $J_{m+i}$ is
processed on $M_{m-i+1}$ in $R_2^{(i-1)}$, we have $R_2^{(i)}
=R_2^{(i-1)}$ and so $L(R_2^{(i)})\preceq_{s}L(R_2^{(i-1)})$. Thus
we may assume that $J_{m+i}$ is processed on a machine $M_x$ with
$x\leq {m-i}$ in $R_2^{(i-1)}$. Let $J_{j}$ be the shorter job on
$M_{m-i+1}$ in $R_2^{(i-1)}$. Then $p_j \leq p_{m+i}$ and $p_x \geq
p_{m-i+1}$. It is easy to see that $(L^{R_2^{(i)}} _x,
L^{R_2^{(i)}}_{m-i+1}) = (p_x + p_j, p_{m-i+1} +p_{m+i}) \preceq_{s}
(p_x + p_{m+i}, p_{m-i+1} +p_j) = (L^{R_2^{(i-1)}} _x,
L^{R_2^{(i-1)}}_{m-i+1})$. Consequently, by lemma \ref{le:2},
$L(R_2^{(i)})\preceq_{s}L(R_2^{(i-1)})$.

The above discussion means that $L(R_2^{(t)}) \preceq_{s} L(R_2)
\preceq_{s} L(R_1) \preceq_{s} L(R)$. Since $R_2^{(t)}$ is
essentially an LPT-schedule, we have $\overleftarrow{L(T)} =
\overleftarrow{L(R_2^{(t)})}$, and so, $L(T) \preceq_{s}
L(R_2^{(t)})$. It follows that $L(T) \preceq_{s} L(R)$. The claim
follows.

Now let $\bar{S}$ be an arbitrary schedule for ${\cal J}$, and let
$\bar{T}$ be the   schedule for ${\cal J}_k$ derived from $\bar{S}$
by deleting jobs $J_{k+1},J_{k+2},\cdots,J_n$. Then
$L(\bar{T})\preceq_{s} L(\bar{S})$. Assume without loss of
generality that $L^{\bar{S}}_1\geq L^{\bar{S}}_2\geq\cdots\geq
L^{\bar{S}}_m$ and $L^{\bar{T}}_{\pi(1)}\geq
L^{\bar{T}}_{\pi(2)}\geq\cdots\geq L^{\bar{T}}_{\pi(m)}$, where
$\pi$ is a permutation of $\{1,2,\cdots,m\}$. For each $i$ with
$1\leq i\leq i_0$, the above claim implies that
$\sum^i_{j=1}L^S_j=\sum^i_{j=1}L^T_j\leq\sum^i_{j=1}L^{\bar{T}}_{\pi(j)}\leq\sum^i_{j=1}L^{\bar{S}}_j$.

Write $P=\sum^n_{j=1}p_j$, $Q=\sum^{i_0}_{i=1}L^S_i$ and
$\bar{Q}=\sum^{i_0}_{i=1}L^{\bar{S}}_i$. Then $Q\leq\bar{Q}$. Note
that, in the case $i_0=0$, we have $Q= \bar{Q} =0$. Let $J_d$ be the
last job scheduled on machine $M_{i_0+1}$ in $S$. By the choice of
$i_0$, $p_d\leq\frac{1}{2}(L^S_{i_0+1}-p_d)$. From the LPT
algorithm, we have $L^S_{i_0+1}-p_d\leq L^S_{j}$,
$j=i_0+1,i_0+2,\cdots,m$. Hence,
$$L^S_{i_0+1}\leq\frac{3}{2}(L^S_{i_0+1}-p_d)\leq\frac{3}{2} \cdot \frac{\sum^m_{j=i_0+1}L^S_{j}}{m-i_0}
=\frac{3}{2}\cdot\frac{1}{m-i_0}(P-Q).$$
 Thus, for each $i$ with  $i_0+1\leq
i\leq m$,  we have
\begin{equation}\label{eq1}
\sum^i_{j=1}L^S_j\leq Q+(i-i_0)L^{S}_{i_0+1}\leq
Q+\frac{3}{2}\cdot\frac{i-i_0}{m-i_0}(P-Q),
\end{equation}
and
\begin{equation}\label{eq2}
\sum^i_{j=1}L^{\bar{S}}_j\geq
\bar{Q}+(i-i_0)\frac{\sum^{i_0+1}_{j=m}L^{\bar{S}}_j}{m-i_0}=
\bar{Q}+\frac{i-i_0}{m-i_0}(P-\bar{Q})\geq
Q+\frac{i-i_0}{m-i_0}(P-Q).
\end{equation}
From (\ref{eq1}) and (\ref{eq2}), we conclude that
$\sum^i_{j=1}L^S_j\leq\frac{3}{2}\sum^i_{j=1}L^{\bar{S}}_j$.
Consequently, $s(S)\leq\frac{3}{2}$. It follows that $WAR(Pm(NP))
\leq \frac{3}{2}$ for $m\geq 4$.

Now let us consider problem $P3(NP)$. Let ${\cal I}$ be an instance.
Denote by $S$ the schedule which minimizes the makespan, and by $T$
the schedule which maximizes the machine cover. Without loss of
generality, we may assume that $L^S_1\geq L^S_2\geq L^S_3$,
$L^T_1\geq L^T_2\geq L^T_3$ and
$L^S_1+L^S_2+L^S_3=L^T_1+L^T_2+L^T_3=1$. Then $s(S)=
\frac{L^S_1+L^S_2}{L^T_1+L^T_2}$ and  $s(T) = \frac{L^T_1}{L^S_1}$.
Consequently, $s^*({\cal
I})\leq\min\{\frac{L^S_1+L^S_2}{L^T_1+L^T_2},\frac{L^T_1}{L^S_1}\}$.
Note that $L^T_1=1-L^T_2-L^T_3\leq 1-2L^T_3$ and
$L^S_1\geq\frac{L^S_1+L^S_2}{2}=\frac{1-L^S_3}{2}$. Then $s^*({\cal
I})\leq\min\{\frac{1-L^S_3}{1-L^T_3},\frac{1-2L^T_3}{\frac{1-L^S_3}{2}}\}$.
Set $x=1-2L^T_3$ and $t=1-L^S_3$. Then $\frac{2}{3}\leq t\leq 1$ and
$s^*({\cal I})\leq
 \min\{\frac{2t}{1+x},\frac{2x}{t}\}$.
If $x\geq\frac{\sqrt{1+4t^2}-1}{2}$, then $s^*({\cal
I})\leq\frac{2t}{1+x}\leq\frac{2t}{1+\frac{\sqrt{1+4t^2}-1}{2}}=\frac{\sqrt{1+4t^2}-1}{t}$.
If $x\leq\frac{\sqrt{1+4t^2}-1}{2}$, then $s^*({\cal
I})\leq\frac{2x}{t}\leq\frac{\sqrt{1+4t^2}-1}{t}$. Note that
$\frac{\sqrt{1+4t^2}-1}{t}\leq\sqrt{5}-1$ for all $t$ with
$\frac{2}{3}\leq t\leq 1$. It follows that $s^*({\cal
I})\leq\sqrt{5}-1$. The result follows.
\end{proof}

For problem $Pm(PP)$, \citet{MC59} presented an optimal algorithm
to generate a schedule which minimizes the makespan. A slight
modification of the algorithm can generate a schedule $S$ with
$s(S)=1$. \vspace{.3cm}

\textbf{Algorithm $MCR$ (with input $\mathcal{M}$ and
$\mathcal{J}$)}
\begin{itemize}
\item[1.] Finding the longest job $J_h$ in  $\mathcal{J}$. If
$p_h\leq\frac{\sum_{J_j\in\mathcal{J}}p_j}{|\mathcal{M}|}$, then
apply McNaughton's algorithm to assign all jobs in $\mathcal{J}$
to the machines in $\mathcal{M}$ evenly, and stop. Otherwise,
assign $J_h$ to an arbitrary machine $M_i\in\mathcal{M}$.
\item[2.] Reset $\mathcal{M}=\mathcal{M}\setminus\{M_i\}$ and
$\mathcal{J}=\mathcal{J}\setminus\{J_h\}$. If
$|\mathcal{J}|\neq0$, then go back to 1. Otherwise, stop.
\end{itemize}\vspace{.1cm}

\begin{lemma}\label{le:4}
Assume $p_1\geq p_2\geq\cdots\geq p_n$ and let $S$ be a preemptive
schedule with $L^S_1\geq L^S_2\geq\cdots\geq L^S_m$. Then
$\sum^k_{i=1}p_i\leq\sum^k_{i=1}L^S_i$, $k=1,2,\cdots,m$.
\end{lemma}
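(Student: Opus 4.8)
The plan is to prove the inequality by a ``layer-cake'' argument over the time axis: express each of $\sum_{i=1}^{k}p_i$ and $\sum_{i=1}^{k}L^S_i$ as the integral over time $t$ of a counting function, and then compare the two integrands pointwise. We may assume $S$ has finitely many preemptions (otherwise the ``integrals'' below are genuine integrals rather than finite sums, with no other change). For $t\ge 0$ let $\gamma(t)$ be the number of machines whose load is at least $t$; since $L^S_1\ge L^S_2\ge\cdots\ge L^S_m$, these are precisely $M_1,\dots,M_{\gamma(t)}$, so $|\{\,i\le k:L^S_i\ge t\,\}|=\min\{\gamma(t),k\}$ for every $t$ outside the finite set $\{L^S_1,\dots,L^S_m\}$. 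Writing $L^S_i=\int_0^\infty \mathbf{1}[t\le L^S_i]\,dt$ and summing over $i\le k$ gives $\sum_{i=1}^{k}L^S_i=\int_0^\infty \min\{\gamma(t),k\}\,dt$.

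On the job side, let $\beta(t)$ be the number of jobs of $\{J_1,\dots,J_k\}$ being processed at time $t$ in $S$. Since job $J_i$ is processed for total time exactly $p_i$, we get $\sum_{i=1}^{k}p_i=\int_0^\infty \beta(t)\,dt$. The key step is the pointwise bound $\beta(t)\le\min\{\gamma(t),k\}$ for almost every $t$: the inequality $\beta(t)\le k$ is trivial, and for $\beta(t)\le\gamma(t)$ note that in a preemptive schedule a machine runs at most one job at any instant, so the $\beta(t)$ jobs of $\{J_1,\dots,J_k\}$ active at time $t$ lie on $\beta(t)$ pairwise distinct machines; each such machine is busy at time $t$, so the time by which it finishes all its work is at least $t$, i.e.\ its load is at least $t$, hence it is one of the $\gamma(t)$ machines counted by $\gamma(t)$. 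Integrating the pointwise bound over $t\ge 0$ and invoking the two representations yields $\sum_{i=1}^{k}p_i\le\sum_{i=1}^{k}L^S_i$, as required.

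I do not anticipate any real difficulty; the argument is short once the right representations are in hand. The only points requiring care are (i) using the definition of ``load'' to conclude that a machine busy at time $t$ has load at least $t$, and (ii) the harmless measure-zero issues at the finitely many instants where $\beta$ or $\gamma$ jumps. The essential content — and the one thing one has to spot — is the clean domination $\beta(t)\le\min\{\gamma(t),k\}$ between the two counting functions.
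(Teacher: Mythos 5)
Your proof is correct and is essentially the paper's argument in integral form: the paper bounds the number of jobs of $\{J_1,\dots,J_k\}$ processed simultaneously on each interval $[L^S_{k+1-i},L^S_{k-i}]$ by the number of machines still busy there and telescopes, which is exactly your pointwise bound $\beta(t)\le\min\{\gamma(t),k\}$ integrated over the piecewise-constant structure of $\gamma$. No gap; the layer-cake phrasing is just a cleaner presentation of the same idea.
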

\begin{proof}
Let ${\cal J}_k=\{J_1,J_2,\cdots,J_k\}$. Then at most $k$ jobs in
${\cal J}_k$ can be processed simultaneously in the time interval
$[0,L^{S}_k]$ and at most $k-i$ jobs of ${\cal J}_k$ can be
processed simultaneously in the time interval
$[L^{S}_{k+1-i},L^{S}_{k-i}]$, $i=1,2,\cdots,k-1$. Therefore,
$\sum^k_{i=1}p_i\leq
kL^S_k+\sum^{k-1}_{i=1}(k-i)(L^{S}_{k-i}-L^{S}_{k+1-i})=\sum^k_{i=1}L^S_i$.
The lemma follows.
\end{proof}

\begin{theo}
$WAR(Pm(PP))=1$.
\end{theo}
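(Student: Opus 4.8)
The plan is to show that the schedule $S$ produced by Algorithm $MCR$ satisfies $s(S)=1$, which immediately gives $WAR(Pm(PP))\le 1$; combined with the trivial lower bound $s(S)\ge 1$ for every schedule (since a schedule cannot dominate itself nontrivially), this yields $WAR(Pm(PP))=1$. So the crux is: for the schedule $S$ output by $MCR$ and for an arbitrary preemptive schedule $R$ of the same instance, we must prove $L(S)\preceq_s L(R)$, i.e. $\sigma(\overleftarrow{L(S)})\preceq_c \sigma(\overleftarrow{L(R)})$.

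First I would describe the structure of $S$. Assume $p_1\ge p_2\ge\cdots\ge p_n$. The algorithm peels off the ``oversized'' jobs one at a time: there is an index $r\ge 0$ such that jobs $J_1,\dots,J_r$ each get a private machine (with $L^S_i=p_i$ for $i=1,\dots,r$), and on the remaining $m-r$ machines McNaughton's algorithm spreads jobs $J_{r+1},\dots,J_n$ perfectly evenly, so each of those machines has load exactly $\frac{1}{m-r}\sum_{j>r}p_j =: A$. The defining property of $r$ is that $p_r > A_{r-1}$-type inequalities hold at each peeling step while $p_{r+1}\le A$; in particular $p_1\ge p_2\ge\cdots\ge p_r > A \ge p_{r+1}\ge\cdots$, so $\overleftarrow{L(S)} = (p_1,\dots,p_r,A,\dots,A)$ and the $k$-th prefix sum of $\overleftarrow{L(S)}$ is $\sum_{i=1}^{k}p_i$ for $k\le r$ and $\sum_{i=1}^{r}p_i + (k-r)A$ for $k>r$.

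Next I would bound $\sigma(\overleftarrow{L(R)})$ from below for an arbitrary preemptive $R$. By Lemma~\ref{le:4}, $\sum_{i=1}^{k}L^R_i\ge \sum_{i=1}^{k}p_i$ for every $k$ (where $L^R_1\ge\cdots\ge L^R_m$), which handles $k\le r$ directly. For $k>r$, I would combine two facts: (i) $\sum_{i=1}^{k}L^R_i \ge \sum_{i=1}^{r}p_i + (\text{something})$ coming from Lemma~\ref{le:4} applied at level $r$ together with the observation that the total load on all $m$ machines is $P=\sum_j p_j$, so the loads on the top $k$ machines of $R$ miss at most the loads on the bottom $m-k$ machines; (ii) the bottom $m-k$ machines of $R$ each have load at most $L^R_{k}\le\cdots$; more usefully, $\sum_{i=1}^{k}L^R_i = P - \sum_{i=k+1}^{m}L^R_i \ge P - (m-k)\cdot\frac{P - \sum_{i=1}^r p_i}{m-r}$ is \emph{not} automatic and is exactly where care is needed. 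The clean way: apply Lemma~\ref{le:4} to get $\sum_{i=1}^{r}L^R_i\ge \sum_{i=1}^{r}p_i$, hence $\sum_{i=r+1}^{m}L^R_i\le P-\sum_{i=1}^{r}p_i=(m-r)A$; since $L^R_{r+1}\ge L^R_{r+2}\ge\cdots\ge L^R_m$, the average of the top $k-r$ of these $m-r$ values is at least the overall average only if... so instead I use that $\sum_{i=r+1}^{k}L^R_i \ge \frac{k-r}{m-r}\sum_{i=r+1}^{m}L^R_i$ is false in general — the correct inequality needs the \emph{reverse} pattern. The right tool is: among $m-r$ sorted-decreasing nonnegative numbers, the sum of the largest $k-r$ is at least $\frac{k-r}{m-r}$ times their total, which \emph{is} true. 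Then $\sum_{i=1}^{k}L^R_i \ge \sum_{i=1}^{r}p_i + \frac{k-r}{m-r}\big(P-\sum_{i=1}^{r}L^R_i\big)$; but now $P-\sum_{i=1}^{r}L^R_i$ could be smaller than $(m-r)A$, so this lower bound could dip below $\sum_{i=1}^{r}p_i+(k-r)A$. To fix this I would argue monotonically: if $\sum_{i=1}^{r}L^R_i$ is larger than $\sum_{i=1}^{r}p_i$, then the extra load helps the prefix sum at level $k$ directly, trading off against the loss in the second term; a short convexity/rearrangement computation shows the total $\sum_{i=1}^{k}L^R_i$ is minimized when $\sum_{i=1}^{r}L^R_i=\sum_{i=1}^{r}p_i$ and the remaining $(m-r)A$ of load is spread evenly, which reproduces exactly $\sigma(\overleftarrow{L(S)})_k$.

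The main obstacle, then, is this last trade-off argument for indices $k>r$: showing that no preemptive schedule can have a smaller $k$-th prefix sum than the perfectly-balanced tail of $S$, even though a competitor might front-load more onto its top $r$ machines. I expect to handle it by induction on $k$ or by the following slick argument: fix $k>r$ and let $B=\sum_{i=1}^k L^R_i$; we know $B = P - \sum_{i=k+1}^m L^R_i$ and each of $L^R_{k+1},\dots,L^R_m$ is at most $\frac{B}{k}$ (as they are among the smallest), so $P - B \le (m-k)\frac{B}{k}$, giving $B\ge \frac{k}{m}P$. Separately $B\ge \sum_{i=1}^k p_i$ by Lemma~\ref{le:4}. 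One of these two bounds, or a weighted combination, should dominate $\sum_{i=1}^r p_i+(k-r)A$; verifying which, using $p_1\ge\cdots\ge p_r>A$ and $A=\frac{1}{m-r}(P-\sum_{i=1}^r p_i)$, is the one genuine computation, and I would present it as a short lemma isolating the inequality $\min\{\sum_{i=1}^k p_i,\ \tfrac{k}{m}P\}\ge \sum_{i=1}^r p_i+(k-r)A$ for $r<k\le m$. Once that is in hand, $L(S)\preceq_s L(R)$ for all $R$, so $s(S)=1$ and $WAR(Pm(PP))=1$.
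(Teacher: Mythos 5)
Your overall strategy is the paper's: take the MCR schedule $S$, whose sorted load vector is $(p_1,\dots,p_r,A,\dots,A)$ with $A=\frac{1}{m-r}\sum_{j>r}p_j$, use Lemma~\ref{le:4} for prefix indices $k\le r$, and settle $k>r$ by an averaging argument. However, the step you ultimately commit to for $k>r$ is broken: the ``short lemma'' $\min\{\sum_{i=1}^k p_i,\ \tfrac{k}{m}P\}\ge \sum_{i=1}^r p_i+(k-r)A$ is false, and even the maximum of your two bounds (hence any weighted combination of them) is insufficient. Take $m=3$, one job of size $5$ and six jobs of size $1$, so $P=11$, $r=1$, $A=3$; for $k=2$ the target is $5+3=8$, while $\sum_{i=1}^2 p_i=6$ and $\tfrac{k}{m}P=\tfrac{22}{3}<8$. (The target itself is correct: the largest load is at least $5$ and, if the top two summed to less than $8$, the third load would exceed $3$, forcing the second to exceed $3$ and the top two to exceed $8$.) The two bounds $B\ge\sum_{i=1}^k p_i$ and $B\ge\tfrac{k}{m}P$ discard exactly the information needed, namely Lemma~\ref{le:4} applied at level $r$ combined with averaging over only the bottom $m-r$ machines.

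The repair is the trade-off you sketch earlier but leave unverified, and it is precisely the paper's computation, needing no induction and no convexity argument. For an arbitrary preemptive schedule $R$ with sorted loads and $k>r$, the values $L^R_{r+1},\dots,L^R_k$ are the largest $k-r$ among the $m-r$ values $L^R_{r+1},\dots,L^R_m$, so they carry at least their proportional share: $\sum_{i=1}^k L^R_i\ \ge\ x+\tfrac{k-r}{m-r}(P-x)$ with $x=\sum_{i=1}^r L^R_i$. Since $k\le m$, the coefficient $\tfrac{k-r}{m-r}\le 1$ and the right-hand side is nondecreasing in $x$; by Lemma~\ref{le:4}, $x\ge\sum_{i=1}^r p_i$, whence $\sum_{i=1}^k L^R_i\ \ge\ \sum_{i=1}^r p_i+\tfrac{k-r}{m-r}\bigl(P-\sum_{i=1}^r p_i\bigr)=\sum_{i=1}^r p_i+(k-r)A=\sum_{i=1}^k L^S_i$. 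Replacing your proposed lemma by this two-line monotonicity observation closes the gap, and the resulting proof coincides with the one in the paper.
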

\begin{proof}
Assume that $p_1\geq p_2\geq\cdots\geq p_n$. Let $i_0$ be the
largest job index such that
$p_{i}>\frac{\sum_{j=i_0}^{n}p_j}{m-i_0+1}$. If there is no such
index, we set $i_0=0$. Let $S$ be the preemptive schedule
generated by algorithm $MCR$ with $L^S_1\geq L^S_2\geq\cdots\geq
L^S_m$. Then we have
\begin{equation}\label{eq:6}
L^S_i=p_{i}, \; i=1,2,\cdots,i_0,
\end{equation}
and
\begin{equation}\label{eq:7}
L^S_i=\frac{\sum_{j=i_0+1}^{n}p_j}{m-i_0}, \;
i=i_0+1,i_0+2,\cdots,m.
\end{equation}
Let $T$ be a preemptive schedule with $L^T_1\geq L^T_2\geq\cdots\geq
L^T_m$. If $1\leq k\leq i_0$, by lemma \ref{le:4} and (\ref{eq:6}),
$\sum^k_{i=1}L^S_i=\sum^k_{i=1}p_i\leq\sum^k_{i=1}L^T_i$. If
$i_0+1\leq k\leq m$, by noting that
$\sum^{i_0}_{i=1}L^S_i\leq\sum^{i_0}_{i=1}L^T_i$, we have
$\sum^k_{i=1}L^S_i= \sum^{i_0}_{i=1}L^S_i+
\frac{k-i_0}{m-i_0}(\sum^n_{i=1}p_i-\sum^{i_0}_{i=1}L^S_i) \leq
\sum^{i_0}_{i=1}L^T_i+
\frac{k-i_0}{m-i_0}(\sum^n_{i=1}p_i-\sum^{i_0}_{i=1}L^T_i) \leq
\sum^k_{i=1}L^T_i$. Hence, $WAR(Pm(PP))=1$. The result follows.
\end{proof}

For problem $Pm(FP)$, the schedule $S$ averagely processing each
job on all machines clearly has $s(S)=1$. Then we have

\begin{theo}
$WAR(Pm(FP))=1$.
\end{theo}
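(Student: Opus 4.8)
The plan is to produce one schedule $S$ that attains $s(S)=1$ on an arbitrary instance; since $s^*(\mathcal{I})=\inf_{S}s(S)$ and $s^*(\mathcal{I})\geq 1$ always holds (from $X\preceq_{s}\alpha X$ evaluated at the last prefix sum, which forces $\alpha\geq 1$ whenever the total load is positive), this gives $s^*(\mathcal{I})=1$ for every $\mathcal{I}$ and hence $WAR(Pm(FP))=\sup_{\mathcal{I}}s^*(\mathcal{I})=1$.

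Concretely, I would fix an instance with jobs $J_1,\dots,J_n$ of processing times $p_1,\dots,p_n$, set $P=\sum_{j=1}^{n}p_j$, and let $S$ be the fractional schedule in which each $J_j$ is cut into $m$ parts of equal length $p_j/m$, one part on each machine. This schedule is feasible under the fractional mode: on any single machine the $n$ parts assigned to it (of total length $P/m$) can be run consecutively with no idle time, so each machine finishes exactly at time $P/m$. Thus $L(S)=\overleftarrow{L(S)}=(P/m,\dots,P/m)$ and $\sigma(\overleftarrow{L(S)})_k=kP/m$ for $k=1,\dots,m$.

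Next I would take an arbitrary feasible schedule $T$ with sorted loads $L^{T}_1\geq\cdots\geq L^{T}_m$. Because every job is fully processed and the machines are identical, the total load is conserved, $\sum_{i=1}^{m}L^{T}_i=P$. Since the $k$ largest of $m$ nonnegative numbers have average at least their overall average, $\sigma(\overleftarrow{L(T)})_k=\sum_{i=1}^{k}L^{T}_i\geq\frac{k}{m}\sum_{i=1}^{m}L^{T}_i=\frac{kP}{m}=\sigma(\overleftarrow{L(S)})_k$ for each $k$. Hence $\sigma(\overleftarrow{L(S)})\preceq_{c}\sigma(\overleftarrow{L(T)})$, i.e. $L(S)\preceq_{s}L(T)$; as this holds for every $T$, we get $s(S)\leq 1$, and together with $s(S)\geq 1$ this yields $s(S)=1$.

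There is no real obstacle here; the two points that merely need a sentence of care are (i) verifying that the perfectly balanced fractional schedule respects the ``at most one job per machine at any time'' constraint (it does, by running the parts on each machine sequentially), and (ii) the elementary fact that the sum of the $k$ largest among $m$ nonnegative reals is at least $k/m$ times their total. Everything else is immediate from the definitions of $\preceq_{s}$, $s(\cdot)$, and $WAR$.
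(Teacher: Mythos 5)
Your proposal is correct and follows exactly the paper's route: the paper's entire argument is that the schedule which processes each job ``averagely'' on all machines has $s(S)=1$, and you simply supply the routine verification (balanced loads $P/m$, feasibility of running the parts sequentially, and the fact that the $k$ largest loads of any schedule sum to at least $kP/m$). No differences worth noting.
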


\section{Related machines}
Assume that $s_1\geq s_2\geq\cdots\geq s_m$. We first present the
exact expression of $WAR(Qm(FP))$ on the  machine speeds $s_1, s_2,
\cdots, s_m$.  Then we show that it is a lower bound for
$WAR(Qm(PP))$ and $WAR(Qm(NP))$.

The fractional processing mode means that all jobs can be merged
into a single job with processing time equal to the sum of
processing times of all jobs. Thus we may assume that ${\cal I}$ is
an instance of $Qm(FP)$ with just one job $J_{\cal I}$. Suppose
without loss of generality that $p_{\cal I}=1$. A schedule $S$  of
${\cal I}$ is called \emph{regular} if $L^{S}_1\geq
L^{S}_2\geq\cdots\geq L^{S}_m$. Then $\overleftarrow{L(S)}=L(S)$ if
$S$ is regular. The following lemma can be observed from the basic
mathematical knowledge.

\begin{lemma}\label{le:11}
Suppose that $x_1\geq{x_2}\geq\cdots\geq{x_n}\geq0$ and
$y_1\geq{y_2}\geq\cdots\geq{y_n}\geq0$. Then
$\sum_{i=1}^{n}x_iy_{\pi(i)}\leq\sum_{i=1}^{n}x_iy_i$ for any
permutation $\pi$ of $\{1,2,\cdots,n\}$.
\end{lemma}

\begin{lemma}\label{le:12}
For any schedule $T$ of $\mathcal{I}$, there exists a regular
schedule $S$ such that $L(S)\preceq_{c}\overleftarrow{L(T)}$.
\end{lemma}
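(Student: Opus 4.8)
Since $\mathcal{I}$ consists of a single job $J_{\cal I}$ of length $1$, a schedule is just a way of splitting the job into fractions $a_1, a_2, \dots, a_m \geq 0$ processed on $M_1, \dots, M_m$ with $\sum_i a_i = 1$; the resulting load on $M_i$ is $a_i / s_i$. So $L(T) = (a_1/s_1, \dots, a_m/s_m)$ for some such fractions. The plan is to exhibit an explicit regular schedule $S$ and verify the two defining conditions of $\preceq_c$, namely $\overleftarrow{L(S)} = L(S)$ (which is what ``regular'' means) and $L(S)_i \leq (\overleftarrow{L(T)})_i$ for every coordinate $i$.

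**Constructing $S$.** The natural candidate is to take the same multiset of fractions $\{a_1, \dots, a_m\}$ used by $T$, but reassign them to machines in the order that makes the load vector non-increasing: put the largest fraction on the fastest machine $M_1$, and so on. Formally, let $b_1 \geq b_2 \geq \cdots \geq b_m$ be the fractions $a_1,\dots,a_m$ sorted in non-increasing order, and define $S$ by assigning fraction $b_i$ to $M_i$, so $L^S_i = b_i/s_i$. Since $b_1 \geq \cdots \geq b_m$ and $s_1 \geq \cdots \geq s_m > 0$, we get $b_1/s_1 \geq b_2/s_2 \geq \cdots \geq b_m/s_m$, i.e. $S$ is regular and $\overleftarrow{L(S)} = L(S)$.

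**The coordinate-wise comparison.** It remains to show $L^S_i \leq (\overleftarrow{L(T)})_i$ for each $i$. Fix $i$. The value $(\overleftarrow{L(T)})_i$ is the $i$-th largest among the numbers $a_1/s_1, \dots, a_m/s_m$. I would argue by a counting/pigeonhole step: among the $i$ largest fractions $b_1, \dots, b_i$, each is assigned under $T$ to some machine; at least one of them, say $b_j$ (with $j \le i$), is assigned by $T$ to a machine $M_\ell$ with $\ell \geq i$ — because $b_1,\dots,b_i$ occupy $i$ machines under $T$, so they cannot all sit on $M_1,\dots,M_{i-1}$. Then under $T$ that machine has load $b_j/s_\ell \geq b_i/s_\ell \geq b_i/s_i = L^S_i$, using $b_j \ge b_i$ and $s_\ell \le s_i$. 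Hence $T$ has at least $i$ machines (the $i-1$ others not equal to $M_\ell$ among $M_1,\dots,M_{i-1}$... ) — more cleanly: I would show that at least $i$ of the loads $a_1/s_1,\dots,a_m/s_m$ are $\geq L^S_i$, which immediately gives $(\overleftarrow{L(T)})_i \geq L^S_i$. To see this, consider the machines $M_1, \dots, M_i$ under $S$; each carries one of $b_1,\dots,b_i$. Look at where $T$ places $b_1,\dots,b_i$: these go to $i$ distinct machines $M_{\ell_1},\dots,M_{\ell_i}$. For each such machine $M_{\ell_r}$, if it receives $b_{q_r}$ under $T$ with $q_r \le i$, then since $s_{\ell_r}$ ranges over $i$ distinct speeds, at least one has $\ell_r \ge i$; more carefully, sorting the $\ell_r$ increasingly, the $r$-th smallest is $\ge r$... — I would set this up as a clean rearrangement statement, possibly invoking Lemma~\ref{le:11} or an exchange argument: among any $i$ of the machines, the sum $\sum_{r} b_{q_r}/s_{\ell_r}$ is minimized when large $b$'s meet large $s$'s, so $T$'s placement of $b_1,\dots,b_i$ gives total load $\ge \sum_{r=1}^i b_r/s_r = \sum_{r=1}^i L^S_r$, and combined with a per-machine bound this forces $i$ of $T$'s loads to dominate $L^S_i$.

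**Main obstacle.** The delicate point is making the pigeonhole/rearrangement argument for each coordinate $i$ airtight rather than just for the prefix sums: I want coordinate-wise domination $L^S_i \le (\overleftarrow{L(T)})_i$, which is stronger than $\sum_{j\le i} L^S_j \le \sum_{j \le i}(\overleftarrow{L(T)})_j$. The cleanest route is the pigeonhole observation: among the $i$ largest fractions $b_1,\dots,b_i$ (all $\ge b_i$), not all can be placed by $T$ on the $i-1$ fastest machines $M_1,\dots,M_{i-1}$, so some $b_j$ with $j \le i$ lands on $M_\ell$ with $\ell \ge i$; iterating this for the prefixes $b_1,\dots,b_r$ for each $r \le i$ produces $i$ distinct machines of $T$ each with load $\ge b_i/s_i = L^S_i$, whence $(\overleftarrow{L(T)})_i \ge L^S_i$. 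I expect verifying the ``$i$ distinct machines'' claim — i.e. that these witnesses for different prefixes can be taken distinct — to be the only genuinely fiddly part, handled by a standard Hall-type / greedy matching argument.
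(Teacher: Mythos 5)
Your construction has a genuine flaw, not just a fiddly final step. Keeping the same multiset of fractions $\{a_1,\dots,a_m\}$ and assigning the sorted fractions $b_1\ge\cdots\ge b_m$ to $M_1,\dots,M_m$ does not make $S$ regular: from $b_1\ge b_2$ and $s_1\ge s_2$ you cannot conclude $b_1/s_1\ge b_2/s_2$ (you are dividing a larger number by a larger number). Worse, the coordinate-wise claim $L^S_i=b_i/s_i\le(\overleftarrow{L(T)})_i$ that you try to obtain by pigeonhole is simply false. Take $m=2$, $s_1=10$, $s_2=1$, and let $T$ split the job evenly, $a_1=a_2=1/2$. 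Then $b_1=b_2=1/2$, your $S$ coincides with $T$, the vector $L(S)=(1/20,\,1/2)$ is not non-increasing, and $L^S_2=1/2>1/20=(\overleftarrow{L(T)})_2$; only one machine of $T$ has load at least $b_2/s_2$, so the ``$i$ distinct machines'' step cannot be repaired. Note also that the witness your iteration produces for the prefix $b_1,\dots,b_r$ only has load at least $b_r/s_r$, which need not be at least $b_i/s_i$ when $r<i$ (in the example $b_1/s_1<b_2/s_2$), so even the iteration itself does not deliver what you need.

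The missing idea is that $S$ should reproduce the sorted \emph{loads} of $T$ (not its fractions) on the fastest machines, which means assigning \emph{more} work, not the same work, to faster machines. The paper sorts the loads, $L^T_{\pi(1)}\ge\cdots\ge L^T_{\pi(m)}$, notes via the rearrangement inequality (Lemma \ref{le:11}) that $\sum_{i=1}^m s_iL^T_{\pi(i)}\ge\sum_{i=1}^m s_{\pi(i)}L^T_{\pi(i)}\ge 1$, and then defines $S$ by giving machine $M_i$ a piece of work of size $s_iL^T_{\pi(i)}$ for $i<i_0$ and the leftover work to $M_{i_0}$, where $i_0$ is the first index at which these amounts add up to $1$. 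This yields $L^S_i=L^T_{\pi(i)}$ for $i<i_0$, $L^S_{i_0}\le L^T_{\pi(i_0)}$, and $L^S_i=0$ for $i>i_0$, which is regular and coordinate-wise dominated by $\overleftarrow{L(T)}$. In the example above this puts the whole job on $M_1$, giving $L(S)=(1/10,\,0)\preceq_c(1/2,\,1/20)$.
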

\begin{proof}
Let $T$ be a schedule of $\mathcal{I}$ and $\pi$  a permutation of
$\{1,2,\cdots,m\}$ such that $L^T_{\pi(1)}\geq
L^T_{\pi(2)}\geq\cdots\geq L^T_{\pi(m)}$. By lemma \ref{le:11},
$\sum^m_{i=1}s_iL^{T}_{\pi(i)}\geq\sum^m_{i=1}s_{\pi(i)}L^{T}_{\pi(i)}\geq
1$.  Let $i_0$ be the smallest machine index such that
$\sum^{i_0}_{i=1}s_iL^{T}_{\pi(i)}\geq1$. Let $S$ be the schedule in
which a part of processing time $s_iL^{T}_{\pi(i)}$ is assigned to
$M_i$, $i=1,2,\cdots,i_0-1$, and the rest part of processing time
$1-\sum^{i_0-1}_{i=1}s_iL^{T}_{\pi(i)}$ is assigned to $M_{i_0}$.
Then we have $L^S_i=L^T_{\pi(i)}$, for $i=1,2,\cdots,i_0-1$,
$L^S_{i_0}=\frac{1-\sum^{i_0-1}_{i=1}s_iL^{T}_{\pi(i)}}{s_{i_0}}
\leq\frac{\sum^{i_0}_{i=1}s_iL^{T}_{\pi(i)}-\sum^{i_0-1}_{i=1}s_iL^{T}_{\pi(i)}}{s_{i_0}}
=L^T_{\pi(i_0)}$, and $L^S_i=0\leq L^T_{\pi(i)}$ for
$i=i_0+1,i_0+2,\cdots,m$. It can be observed that $S$ is regular and
$L(S)\preceq_{c}\overleftarrow{L(T)}$. The lemma follows.
\end{proof}

Let $f(i)$ be the infimum of the sum of the first $i$ coordinates of
$\overleftarrow{L(T)}$ in all feasible schedule $T$ of
$\mathcal{I}$, $i=1,2,\cdots,m$. By lemma \ref{le:12}, we have
$f(i)=\inf\{\sum_{k=1}^{i}L^{S}_{k}:  S\mbox{ is
regular}\},i=1,2,\cdots,m$. Then, for each schedule $T$ of
$\mathcal{I}$ with $L^T_{\pi(1)}\geq L^T_{\pi(2)}\geq\cdots\geq
L^T_{\pi(m)}$ for some permutation $\pi$ of $\{1,2,\cdots,m\}$, we
have
\begin{equation}\label{eq:1-1}
s(T)=\max_{1\leq i\leq
m}\left\{\frac{\sum_{k=1}^{i}L^{\tau}_{\pi(k)}}{f(i)}\right\}.
\end{equation}
The following lemma gives the exact expression for each $f(i)$.

\begin{lemma}\label{le:13}
$f(i)=\left\{\begin{array}{cc}
               \frac{i}{\sum^m_{k=1}s_k}, & i\leq\frac{\sum^m_{k=1}s_k}{s_1}
               ;\\[0.2cm]
               \frac{1}{s_1}, & i>\frac{\sum^m_{k=1}s_k}{s_1}.
             \end{array}
\right.$
\end{lemma}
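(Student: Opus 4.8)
The plan is to prove the formula by establishing matching upper and lower bounds on $f(i)$, using the characterization $f(i)=\inf\{\sum_{k=1}^{i}L^{S}_{k}: S\mbox{ regular}\}$ already obtained from Lemma \ref{le:12}. Throughout I will use that a regular schedule $S$ satisfies $L^{S}_1\geq L^{S}_2\geq\cdots\geq L^{S}_m\geq 0$ and, since the whole unit of work must be completed, $\sum_{k=1}^{m}s_kL^{S}_k\geq 1$.

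For the upper bound I would exhibit two explicit regular schedules. The \emph{uniform} schedule, which assigns to each $M_k$ a part of size $s_k/\sum_{j=1}^{m}s_j$, has all loads equal to $1/\sum_{j=1}^{m}s_j$, hence its $i$-th prefix sum is $i/\sum_{j=1}^{m}s_j$; the \emph{concentrating} schedule, which processes the whole job on $M_1$, has load vector $(1/s_1,0,\ldots,0)$, hence its $i$-th prefix sum is $1/s_1$. Therefore $f(i)\leq\min\{i/\sum_{j=1}^{m}s_j,\ 1/s_1\}$, and since $i/\sum_{j=1}^{m}s_j\leq 1/s_1$ is equivalent to $i\leq\frac{\sum_{k=1}^{m}s_k}{s_1}$, this minimum is exactly the right-hand side claimed in the lemma (the two expressions coinciding at the breakpoint).

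For the lower bound, let $S$ be an arbitrary regular schedule, and set $c=L^{S}_i$ and $P=\sum_{k=1}^{i}L^{S}_k$; note $c\leq P/i$, as $c$ is the smallest of the first $i$ loads. Splitting the work constraint at index $i$, I bound the two parts separately: for $k\leq i$, writing $L^{S}_k=c+(L^{S}_k-c)$ with $L^{S}_k-c\geq 0$ and using $s_k\leq s_1$ gives $\sum_{k=1}^{i}s_kL^{S}_k\leq c\sum_{k=1}^{i}s_k+s_1(P-ic)$; for $k>i$, using $L^{S}_k\leq L^{S}_i=c$ gives $\sum_{k=i+1}^{m}s_kL^{S}_k\leq c\sum_{k=i+1}^{m}s_k$. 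Adding these and recalling $\sum_{k=1}^{m}s_kL^{S}_k\geq 1$,
\[
1\ \leq\ c\Big(\sum_{k=1}^{m}s_k- i s_1\Big)+s_1 P .
\]
If $i\geq\frac{\sum_{k=1}^{m}s_k}{s_1}$, the coefficient of $c$ is non-positive and $c\geq 0$, so $1\leq s_1 P$, i.e.\ $P\geq 1/s_1$. If $i<\frac{\sum_{k=1}^{m}s_k}{s_1}$, the coefficient of $c$ is positive, so substituting $c\leq P/i$ yields $1\leq P\sum_{k=1}^{m}s_k/i$, i.e.\ $P\geq i/\sum_{k=1}^{m}s_k$. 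In both cases the $i$-th prefix sum of an arbitrary regular schedule is at least the claimed value of $f(i)$, which together with the upper bound gives the formula.

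The one delicate point is the lower bound. The naive estimate — bounding $\sum_{k\leq i}s_kL^{S}_k\leq s_1 P$ and $\sum_{k>i}s_kL^{S}_k\leq (P/i)\sum_{k>i}s_k$ independently — is strictly too weak, because these two inequalities cannot be tight at the same time (the first forces the mass onto $M_1$, the second forces the first $i$ loads to be equal). Introducing the auxiliary quantity $c=L^{S}_i$ is precisely what couples the two bounds correctly, so that the case distinction on the sign of $\sum_{k=1}^{m}s_k-is_1$ — which is exactly the case distinction in the statement — comes out cleanly.
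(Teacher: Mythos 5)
Your proof is correct and takes essentially the same route as the paper: the upper bound comes from the same two extremal regular schedules (the uniform split and the schedule placing everything on $M_1$), and the lower bound uses exactly the constraints $L^S_1\geq L^S_2\geq\cdots\geq L^S_m\geq 0$ and $\sum_{k=1}^m s_kL^S_k\geq 1$. The only difference is organizational: you fold the paper's two separate estimates (which multiply the prefix sum by $\tfrac{1}{i}\sum_{k=1}^m s_k$ in one case and by $s_1$ in the other) into a single linear inequality in $c=L^S_i$ and then case-split on the sign of $\sum_{k=1}^m s_k-is_1$, which is a tidy but equivalent repackaging of the same argument.
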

\begin{proof}
Fix index $i$ and let $S$ be a regular schedule. Then we have
\begin{equation}\label{eq:10}
L^{S}_1\geq L^{S}_2\geq\cdots\geq L^{S}_m
\end{equation}
and
\begin{equation}\label{eq:11}
\sum^m_{i=1}s_iL^{S}_i\geq1.
\end{equation}
So we only need to find a regular schedule $S$ meeting (\ref{eq:10})
and (\ref{eq:11}) such that $\sum_{k=1}^{i}L^{S}_{k}$ reaches the
minimum.

If $i\leq\frac{\sum^m_{k=1}s_k}{s_1}$, by (\ref{eq:10}) and
(\ref{eq:11}),
\begin{eqnarray}
\sum_{t=1}^{i}\left(\frac{\sum^m_{k=1}s_k}{i}\right)L^{S}_{t}&=&\sum^i_{t=1}s_tL^{S}_t+\sum_{t=1}^{i}\left(\frac{\sum^m_{k=1}s_k}{i}-s_t\right)L^{S}_{t}\nonumber\\
&\geq&\sum^i_{t=1}s_tL^{S}_t+\sum_{t=1}^{i}\left(\frac{\sum^m_{k=1}s_k}{i}-s_t\right)L^{S}_{i+1}\nonumber\\
&=&\sum^i_{t=1}s_tL^{S}_t+\left(\sum_{t=i+1}^{m}s_t\right)L^{S}_{i+1}\nonumber\\
&\geq&\sum^i_{t=1}s_tL^{S}_t+\sum^m_{t=i+1}s_tL^{S}_t=\sum^m_{t=1}s_tL^{S}_t\geq1.\nonumber
\end{eqnarray}
The equality holds if and only if
$L^S_1=L^S_2=\cdots=L^S_m=\frac{1}{\sum^m_{k=1}s_k}$. Then the
regular schedule $S$ can be defined by the way that  a part of
processing time $\frac{s_k}{\sum^m_{k=1}s_k}$ is assigned to $M_k$,
$k=1,2,\cdots,m$. Thus, $f(i)=\frac{i}{\sum^m_{k=1}s_k}$.

If $i>\frac{\sum^m_{k=1}s_k}{s_1}$,   we can similarly deduce
\begin{eqnarray}
\sum_{k=1}^{i}s_1L^{S}_{k}&=&\sum_{k=1}^{i}s_kL^{S}_{k}+\sum_{k=1}^{i}(s_1-s_k)L^{S}_{k}\nonumber\\
&\geq&\sum_{k=1}^{i}s_kL^{S}_{k}+\sum_{k=1}^{i}(s_1-s_k)L^{S}_{i}\nonumber\\
&=&\sum_{k=1}^{i}s_kL^{S}_{k}+\left(is_1-\sum^i_{k=1}s_k\right)L^{S}_{i}\nonumber\\
&\geq&\sum_{k=1}^{i}s_kL^{S}_{k}+\left(\sum^m_{k=1}s_k-\sum^i_{k=1}s_k\right)L^{S}_{i}\nonumber\\
&\geq&\sum_{k=1}^{i}s_kL^{S}_{k}+\sum^m_{k=i+1}s_kL^{S}_{k}=\sum^m_{k=1}s_kL^{S}_k\geq1.\nonumber
\end{eqnarray}
The equality holds if and only if
$L^S_1=\frac{1}{s_1},L^S_2=\cdots=L^S_m=0$. Then the regular
schedule $S$ can be defined by the way that $J_{\mathcal{I}}$ is
scheduled totally on $M_1$ in $S$. Thus $f(i)=\frac{1}{s_1}$. The
lemma follows.
\end{proof}

By lemma \ref{le:12}, $s^*(\mathcal{I})=\inf\{s(S): S\mbox{ is
regular}\}$. For each regular schedule $S$, by (\ref{eq:1-1}) and
lemma \ref{le:13}, we have $\sum^i_{k=1}L^{S}_k\leq{s(L(S))}{f(i)}$
for $i=1,2,\cdots,m.$.

Let $s_{m+1}=0$ and $\frac{\sum^m_{i=1}s_i}{s_1}=t+\Delta$, where
$t$ with  $1\leq t\leq m$ is a positive integer and
$0\leq\Delta<1$. By lemma \ref{le:13}, we have
\begin{equation}\label{eq:13}
i\cdot\frac{s(L(S))}{\sum^m_{k=1}s_k}\geq\sum^i_{k=1}L^{S}_k, \;
i=1,2,\cdots,t.
\end{equation}
and
\begin{equation}\label{eq:14}
\frac{s(L(S))}{s_1}\geq\sum^i_{k=1}L^{S}_k, \; i=t+1,t+2,\cdots,m.
\end{equation}
From (\ref{eq:13}) and (\ref{eq:14}), we have
$\sum^t_{i=1}(s_i-s_{i+1})\cdot
i\cdot\frac{s(L(S))}{\sum^m_{i=1}s_i}+\sum^m_{i=t+1}(s_i-s_{i+1})\frac{s(L(S))}{s_1}
\geq \sum^t_{i=1}(s_i-s_{i+1})\sum^i_{t=1}L^{S}_t
+\sum^m_{i=t+1}(s_i-s_{i+1})\sum^i_{t=1}L^{S}_t =
\sum^m_{i=1}s_iL^{S}_i=1$. Hence,
$s(S)\geq\frac{\sum^m_{i=1}s_i}{\sum^t_{i=1}s_i+\left(\frac{\sum^m_{i=1}s_i}{s_1}-t\right)s_{t+1}}=\frac{\sum^m_{i=1}s_i}{\sum^t_{i=1}s_i+\Delta
s_{t+1}}$. Note that the equality holds if and only if
$L^S_1=L^S_2=\cdots=L^S_t=\frac{1}{\sum^t_{i=1}s_i+\Delta s_{t+1}}$,
$L^S_{t+1}=\frac{\Delta}{\sum^t_{i=1}s_i+\Delta s_{t+1}}$ and
$L^S_{t+2}=L^S_{t+3}=\cdots=L^S_m=0$. Then the corresponding regular
schedule $S$ can be defined by the way that a part of processing
time $\frac{s_i}{\sum^t_{k=1}s_k+\Delta s_{t+1}}$ is assigned to
$M_i$, $i=1,2,\cdots,t$, and the rest part of processing time
$\frac{\Delta s_{t+1}}{\sum^t_{i=1}s_i+\Delta s_{t+1}}$ is assigned
to $M_{t+1}$. Hence,
$s^*({\mathcal{I}})=\frac{\sum^m_{i=1}s_i}{\sum^t_{i=1}s_i+\Delta
s_{t+1}}$. Consequently, $WAR(Qm(FP)) =
\frac{\sum^m_{i=1}s_i}{\sum^t_{i=1}s_i+\Delta s_{t+1}}$  if the
machine speeds are fixed.

If the machine speeds are parts of the input, by the fact that
$s_1\geq s_2\geq\cdots\geq s_m$, we have
\begin{equation}\label{eq:15}
\frac{\sum^t_{i=2}s_i+\Delta
s_{t+1}}{t-1+\Delta}\geq\frac{\sum^m_{i=2}s_i}{m-1}.
\end{equation}
Let $\theta=\frac{\sum^m_{i=2}s_i}{m-1}$ and
$\vartheta=\frac{s_1}{\theta}>1$. Then
\begin{equation}\label{eq:16}
t+\Delta=\frac{\sum^m_{i=1}s_i}{s_1}=\frac{s_1+(m-1)\theta}{s_1}=\frac{\vartheta+m-1}{\vartheta}.
\end{equation}
Obviously,
$\frac{m}{\vartheta-1}+(\vartheta-1)\geq2\sqrt{\frac{m}{\vartheta-1}(\vartheta-1)}=2\sqrt{m}$.
By (\ref{eq:15}) and (\ref{eq:16}), we have
$\frac{\sum^m_{i=1}s_i}{\sum^t_{i=1}s_i+\Delta s_{t+1}} =
\frac{s_1+(m-1)\frac{\sum^m_{i=2}s_i}{m-1}}{s_1+(t-1+\Delta)\frac{\sum^t_{i=2}s_i+\Delta
s_{t+1}}{t-1+\Delta}} \leq
\frac{s_1+(m-1)\frac{\sum^m_{i=2}s_i}{m-1}}{s_1+(t-1+\Delta)\frac{\sum^m_{i=2}s_i}{m-1}}
= 1+\frac{m-1}{(\frac{m}{\vartheta-1}+(\vartheta-1))+2} \leq
1+\frac{m-1}{2\sqrt{m}+2}=\frac{\sqrt{m}+1}{2}$. So we have
$s^*(\mathcal{I})\leq\frac{\sqrt{m}+1}{2}$ and therefore
$WAR(Qm(FP)) \leq\frac{\sqrt{m}+1}{2}$.

To show that $WAR(Qm(FP)) = \frac{\sqrt{m}+1}{2}$, we consider the
following instance $\mathcal{I}$ with $p_\mathcal{I} =1$,
$s_1=s=\sqrt{m}+1>1$ and $s_2=s_3=\cdots=s_m=1$. Let $S$ be a
regular schedule and write   $x=sL^{S}_1$. Then
$\sum^m_{t=2}L^{S}_t=1-x$. By lemma \ref{le:13} and (\ref{eq:1-1}),
we have
$s(S)\geq\max\left\{\frac{L^{S}_1}{f(1)},\frac{\sum_{i=1}^{m}L^{S}_i}{f(m)}\right\}
=  \max\left\{\frac{x(s+m-1)}{s},x+s(1-x)\right\} \geq
\frac{s^2+sm-s}{s^2+m-1}=\frac{\sqrt{m}+1}{2}$, where the inequality
follows from the fact that  $\frac{x(s+m-1)}{s}$ is an increasing
function in $x$ while $x+s(1-x)$ is a decreasing function in $x$ and
they meet with $\frac{s^2+sm-s}{s^2+m-1}$ when
$x=\frac{s^2}{s^2+m-1}$. Then
$s^*(\mathcal{I})\geq\frac{\sqrt{m}+1}{2}$. Consequently,
$WAR(Qm(FP)) = \frac{\sqrt{m}+1}{2}$.

The above discussion leads to the following conclusion.

\begin{theo}\label{th:14}
If the machine speeds $s_1,s_2,\cdots,s_m$ are fixed, then
$WAR(Qm(FP) = \frac{\sum^m_{i=1}s_i}{\sum^t_{i=1}s_i+\Delta
s_{t+1}}$, where $\frac{\sum^m_{i=1}s_i}{s_1}=t+\Delta$, $1\leq
t\leq m$ is a positive integer and $0\leq\Delta<1$. If the machine
speeds $s_1,s_2,\cdots,s_m$ are parts of the input, then
$WAR(Qm(FP) = \frac{\sqrt{m}+1}{2}$.
\end{theo}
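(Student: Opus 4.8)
The plan is to assemble the statement from the two halves of the discussion that precedes it. For the \textbf{fixed-speed case}, I would first record the reduction already used above: under fractional processing every instance is equivalent (up to scaling, which does not affect the scale-invariant quantity $s^*$) to the canonical instance $\mathcal{I}$ consisting of a single job $J_{\mathcal{I}}$ of processing time $1$; hence $V(\mathcal{I})$ depends only on $s_1,\dots,s_m$ and $WAR(Qm(FP))$ equals $s^*(\mathcal{I})$ for this one instance. By Lemma \ref{le:12} it suffices to range over \emph{regular} schedules, and by (\ref{eq:1-1}) together with Lemma \ref{le:13} each regular $S$ satisfies $\sum_{k=1}^i L^S_k \le s(L(S))\,f(i)$ with $f(i)$ explicitly known. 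The key move is to take the nonnegative combination of these inequalities over $i=1,\dots,m$ with coefficients $s_i-s_{i+1}$ (setting $s_{m+1}=0$): on the right the prefix sums telescope to $\sum_{i=1}^m s_i L^S_i = 1$, while on the left one collects $s(L(S))\bigl(\sum_{i=1}^t s_i + \Delta s_{t+1}\bigr)/\sum_{i=1}^m s_i$, giving the lower bound $s(S)\ge \frac{\sum_{i=1}^m s_i}{\sum_{i=1}^t s_i + \Delta s_{t+1}}$. I would then exhibit the regular schedule placing load $1/(\sum_{i=1}^t s_i + \Delta s_{t+1})$ on $M_1,\dots,M_t$, load $\Delta/(\sum_{i=1}^t s_i + \Delta s_{t+1})$ on $M_{t+1}$, and $0$ elsewhere, and check that it makes every inequality (\ref{eq:13})–(\ref{eq:14}) tight, so its value in (\ref{eq:1-1}) is exactly the claimed quantity; this settles the first part.

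For the \textbf{variable-speed case}, I would take the supremum of $g(s_1,\dots,s_m):=\frac{\sum_{i=1}^m s_i}{\sum_{i=1}^t s_i+\Delta s_{t+1}}$ over all speed vectors with $s_1\ge\cdots\ge s_m>0$. Writing $\theta=\frac{1}{m-1}\sum_{i=2}^m s_i$ and $\vartheta=s_1/\theta>1$, monotonicity of the speeds in the form (\ref{eq:15}) bounds the denominator of $g$ below by $s_1+(t-1+\Delta)\theta$, which via the identity (\ref{eq:16}) reduces $g$ to $1+\frac{m-1}{(\frac{m}{\vartheta-1}+(\vartheta-1))+2}$; then $\frac{m}{\vartheta-1}+(\vartheta-1)\ge 2\sqrt m$ by AM–GM gives $g\le\frac{\sqrt m+1}{2}$, hence $WAR(Qm(FP))\le\frac{\sqrt m+1}{2}$. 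For the matching lower bound I would use the explicit instance $s_1=s:=\sqrt m+1$, $s_2=\cdots=s_m=1$: here $f(1)=\frac{s}{s+m-1}$ and $f(m)=\frac1s$, and for any regular $S$, writing $x=sL^S_1$, the two ratios $\frac{L^S_1}{f(1)}=\frac{x(s+m-1)}{s}$ and $\frac{\sum_i L^S_i}{f(m)}=x+s(1-x)$ are respectively increasing and decreasing in $x$ and cross at $\frac{s^2+sm-s}{s^2+m-1}=\frac{\sqrt m+1}{2}$, so $s(S)\ge\frac{\sqrt m+1}{2}$ for every regular $S$, whence $s^*(\mathcal{I})=\frac{\sqrt m+1}{2}$.

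As for where the work lies: the substance is already packaged into Lemmas \ref{le:12} and \ref{le:13}, so the theorem itself is mostly bookkeeping. The one genuinely delicate point is the telescoping combination — one must verify the coefficients $s_i-s_{i+1}$ are nonnegative (exactly where $s_1\ge\cdots\ge s_m$ and the definition of the breakpoint $t$ via $\sum_i s_i/s_1=t+\Delta$ come in) and that a \emph{single} schedule can make all of (\ref{eq:13}) and (\ref{eq:14}) tight simultaneously, so that no coordinate $i$ forces a strictly larger ratio in (\ref{eq:1-1}). The optimization over speeds is then a routine AM–GM estimate, and the extremal instance is the natural "one fast machine, the rest slow" configuration that the equality conditions point to.
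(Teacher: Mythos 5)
Your proposal is correct and follows essentially the same route as the paper: reduction to the single-job instance, Lemma \ref{le:12} and Lemma \ref{le:13} via (\ref{eq:1-1}), the telescoping combination with coefficients $s_i-s_{i+1}$ plus the explicit tight schedule for fixed speeds, and the $\theta,\vartheta$ AM--GM bound together with the instance $s_1=\sqrt m+1$, $s_2=\cdots=s_m=1$ for the variable-speed case. The only blemish is the stated value $f(1)=\frac{s}{s+m-1}$, which should be $\frac{1}{s+m-1}$ by Lemma \ref{le:13}; your subsequent ratio $\frac{L^S_1}{f(1)}=\frac{x(s+m-1)}{s}$ already uses the correct value, so the argument is unaffected.
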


\begin{lemma}\label{le:15}
If the machine speeds $s_1,s_2,\cdots,s_m$ are fixed, then
$WAR(Qm(NP)) \geq WAR(Qm(FP))$ and  $WAR(Qm(PP)) \geq WAR(Qm(FP))$.
\end{lemma}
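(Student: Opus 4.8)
The plan is to show that any instance of $Qm(FP)$ can be ``simulated'' by an instance of $Qm(NP)$ (respectively $Qm(PP)$) whose best simultaneous approximation ratio is arbitrarily close to that of the fractional instance. Since $WAR(Qm(FP))$ is achieved, for fixed speeds, by the single-job instance $\mathcal{I}$ with $p_{\mathcal{I}}=1$ (this is the content of the discussion preceding Theorem~\ref{th:14}), it suffices to approximate this one job by many small identical jobs. So I would fix an integer $N$ and let $\mathcal{I}_N$ be the instance with $N$ jobs each of standard processing time $1/N$, on the same machines $M_1,\dots,M_m$ with speeds $s_1\geq\cdots\geq s_m$.

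The key observation is that, as $N\to\infty$, the set of achievable load vectors of $\mathcal{I}_N$ (under non-preemptive or preemptive processing) converges to the set of achievable load vectors of $\mathcal{I}$ under fractional processing. More precisely, I would argue two directions. First, every fractional schedule of $\mathcal{I}$ can be approximated to within $O(1/N)$ in each coordinate by a non-preemptive schedule of $\mathcal{I}_N$: assign to each machine $M_i$ a number of the length-$1/N$ jobs that fills it up to roughly the target load $L^S_i$, the rounding error on each machine being at most $1/(Ns_i)$. Hence $f_N(i)$, the infimum over schedules of $\mathcal{I}_N$ of the $i$-th prefix sum of the sorted load vector, satisfies $f_N(i)\leq f(i)+O(1/N)$. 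Second, and this is where one must be a little careful, any non-preemptive (or preemptive) schedule of $\mathcal{I}_N$ is in particular a fractional schedule of $\mathcal{I}$ after merging, so its sorted prefix sums are bounded below by the fractional values $f(i)$; combined with Lemma~\ref{le:12} this lets one relate $s(S)$ for a schedule $S$ of $\mathcal{I}_N$ to the fractional bound. Putting these together, $s^*(\mathcal{I}_N)\geq s^*(\mathcal{I}) - O(1/N)$, so $WAR(Qm(NP))\geq s^*(\mathcal{I}_N)\geq WAR(Qm(FP))-O(1/N)$ for every $N$, and letting $N\to\infty$ gives $WAR(Qm(NP))\geq WAR(Qm(FP))$. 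The identical argument with ``non-preemptive'' replaced by ``preemptive'' handles $WAR(Qm(PP))$, since a preemptive schedule is also a fractional one.

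The main obstacle I anticipate is making the lower-bound direction fully rigorous, i.e.\ showing that $s^*(\mathcal{I}_N)$ cannot be much smaller than $s^*(\mathcal{I})$. One must verify that an adversary scheduling the $N$ tiny jobs cannot exploit the extra flexibility to beat the fractional optimum by more than a $1/N$-order term; the clean way is to note that the $i$-th sorted prefix sum of \emph{any} schedule of $\mathcal{I}_N$ is at least $f(i)$ (the fractional infimum, by merging the jobs), whereas the optimal fractional schedule $S^*$ of $\mathcal{I}$ achieving $s^*(\mathcal{I})$ has an explicit form (loads $1/(\sum_{i=1}^t s_i+\Delta s_{t+1})$ on $M_1,\dots,M_t$, etc.) that can be reproduced up to $O(1/N)$ by a schedule of $\mathcal{I}_N$, giving $s^*(\mathcal{I}_N)\leq s(S^*_N)$ with $s(S^*_N)$ close to $s^*(\mathcal{I})$ is \emph{not} what we want — we want the reverse. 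Instead the correct chain is: for \emph{every} schedule $S$ of $\mathcal{I}_N$, $s(S)=\max_i (\text{sorted prefix sum}_i(L(S)))/f_N(i)\geq \max_i f(i)/(f(i)+O(1/N))\cdot(\text{sorted prefix sum}_i(L(S)))/f(i)$, and since merging shows each sorted prefix sum of $L(S)$ is a valid fractional prefix sum, one gets $s(S)\geq (1-O(1/N)) s^*(\mathcal{I})$. I would spell this inequality out carefully, as it is the crux; everything else is bookkeeping with the $O(1/N)$ rounding terms.
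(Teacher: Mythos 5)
Your proposal is correct and follows essentially the same route as the paper: the paper also reduces to the single-job fractional instance, discretizes it into an instance with jobs of size at most $\epsilon$ (your $1/N$), shows the prefix-sum infima satisfy $g(i)\leq f(i)+i\epsilon$ by greedily filling machines to the fractional target loads, and uses merging of a non-preemptive (or preemptive) schedule into a fractional one to conclude — the only difference being that the paper argues by contradiction with a generic small-job instance while you take $N$ equal jobs and pass to the limit directly, which is an inessential variation.
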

\begin{proof}
We only consider the non-preemptive processing mode. For the
preemptive processing mode, the result can be similarly proved.
Given a schedule $S$, we denote by $\pi^{S}$ the permutation of
$\{1,2,\cdots,m\}$ such that $L^S_{\pi^{S}(1)}\geq
L^S_{\pi^{S}(2)}\geq\cdots\geq L^S_{\pi^{S}(m)}$.

Suppose without loss of generality that $s_m=1$. Write $\eta
=WAR(Qm(NP))$. Let $\mathcal{I}$ be an instance of $Q_m(FP)$ with
only one job $J_{\mathcal{I}}$ of processing time 1. For each $i$,
set $f(i)$ to be the infimum of $\sum_{k=1}^{i}L^{S}_{\pi^{S}(k)}$
of schedule $S$ over all fractional schedules of $\mathcal{I}$. We
only need to show that $s^*(\mathcal{I})\leq\eta$.

Assume to the contrary that $s^*(\mathcal{I})>\eta$. Let
$\epsilon>0$ be a sufficiently small number such that
$\eta(f(i)+i\epsilon)<s^*(\mathcal{I})f(i)$, $i=1,2,\cdots,m$. Let
$\mathcal{H}$ be an instance of $Q_m(NP)$ such that the total
processing time of jobs is equal to $1$ and the processing time of
each job is at most $\epsilon$. For each $i$, let $g(i)$ be the
infimum of $\sum_{k=1}^{i}L^{S}_{\pi^{S}(k)}$ of schedule $S$ over
all feasible schedules of $\mathcal{H}$. We assert that
\begin{equation}\label{eq:17}
g(i)\leq f(i)+i\epsilon, \; i=1,2,\cdots,m.
\end{equation}

To the end, let $S_i$ be the regular schedule of $\mathcal{I}$ such
that $\sum^{i}_{k=1}L^{S_i}_{k}=f(i)$, $i=1,2,\cdots,m$. Fix index
$i$, we construct a non-preemptive schedule $S$ of $\mathcal{H}$
such that $\sum_{k=1}^{i}L^{S}_{\pi^{S}(k)}\leq f(i)+i\epsilon$.
This leads to $g(i)\leq\sum_{k=1}^{i}L^{S}_{\pi^{S}(k)}\leq
f(i)+i\epsilon$, and therefore, proves the assertion. The
construction of $S$ is stated as follows. First, we assign jobs to
$M_i$ one by one until $L^S_1\geq L^{S_{i}}_1$. Then we assign the
rest jobs to $M_2$ one by one until $L^S_2\geq L^{S_i}_2$. This
procedure is repeated  until all jobs are assigned. According to the
construction of $S$, we have $L^S_{k}\leq
L^{S_{i}}_k+\frac{\epsilon}{s_k}\leq L^{S_{i}}_k+\epsilon$,
$k=1,2,\cdots,m$. Note that $L^{S_i}_1\geq L^{S_i}_2\geq\cdots\geq
L^{S_i}_m$. Then
$\sum_{k=1}^{i}L^{S}_{\pi^{S}(k)}\leq\sum_{k=1}^{i}(L^{S_i}_{\pi^{S}(k)}+\epsilon)
\leq\sum^{i}_{k=1}L^{S_i}_{k}+i\epsilon=f(i)+i\epsilon$.

Let $R$ be the schedule of $\mathcal{H}$ such that
$s(R)=s^*(\mathcal{H})$. It can be observed that there exists a
schedule $T$ of $\mathcal{I}$ such that $L(T)\preceq_{c}L(R)$.
Hence, for each $i$ with $1\leq i\leq m$, we have
$\sum^{i}_{k=1}L^T_{\pi^T(k)} \leq \sum^{i}_{k=1}L^{R}_{\pi^T(k)}
\leq\sum^{i}_{k=1}L^{R}_{\pi^R(k)} \leq {s(R)g(i)} \leq
{s^*(\mathcal{H})(f(i)+i\epsilon)} \leq
\eta(f(i)+i\epsilon)<s^*(\mathcal{I})f(i)$. This contradicts the
definition of $s^*(\mathcal{I})$. So $s^*(\mathcal{I})\leq\eta$. The
result follows.
\end{proof}

By theorem \ref{th:14} and lemma \ref{le:15}, the following theorem
holds.

\begin{theo}\label{th:16}
If the machine speeds $s_1,s_2,\cdots,s_m$ are fixed, then
$WAR({\cal P}) \geq \frac{\sum^m_{i=1}s_i}{\sum^t_{i=1}s_i+\Delta
s_{t+1}}$ for ${\cal P} \in \{ Qm(NP), Qm(PP)\}$, where
$\frac{\sum^m_{i=1}s_i}{s_1}=t+\Delta$, $t$ is a positive integer
with  $1\leq t\leq m$, and $0\leq\Delta<1$. If the machine speeds
$s_1,s_2,\cdots,s_m$  are parts of the input, then $WAR({\cal P})
\geq \frac{\sqrt{m}+1}{2}$ for ${\cal P} \in \{ Qm(NP), Qm(PP)\}$.
\end{theo}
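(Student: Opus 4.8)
The plan is to derive the statement as a direct corollary of Theorem~\ref{th:14} and Lemma~\ref{le:15}. Recall that Theorem~\ref{th:14} gives the exact value $WAR(Qm(FP))=\frac{\sum^m_{i=1}s_i}{\sum^t_{i=1}s_i+\Delta s_{t+1}}$ when the speeds are fixed, and $WAR(Qm(FP))=\frac{\sqrt{m}+1}{2}$ when the speeds are part of the input. Lemma~\ref{le:15} asserts that for fixed speeds, both $WAR(Qm(NP))$ and $WAR(Qm(PP))$ are at least $WAR(Qm(FP))$. So the first case of the theorem is immediate: combine these two facts to get $WAR({\cal P})\geq\frac{\sum^m_{i=1}s_i}{\sum^t_{i=1}s_i+\Delta s_{t+1}}$ for ${\cal P}\in\{Qm(NP),Qm(PP)\}$.

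For the second case (speeds part of the input), the key observation is that $WAR({\cal P})$ is a supremum over all instances, hence over all admissible speed profiles as well. First I would fix an arbitrary speed profile $s_1\geq s_2\geq\cdots\geq s_m$; by the first case just established, $WAR({\cal P})\geq\frac{\sum^m_{i=1}s_i}{\sum^t_{i=1}s_i+\Delta s_{t+1}}$ for the problem on those speeds. Then I would invoke the particular speed profile used in the proof of Theorem~\ref{th:14} to show the lower bound $\frac{\sqrt{m}+1}{2}$ is attained, namely $s_1=\sqrt{m}+1$ and $s_2=\cdots=s_m=1$; for that profile one computes $\frac{\sum^m_{i=1}s_i}{s_1}=t+\Delta$ and checks that $\frac{\sum^m_{i=1}s_i}{\sum^t_{i=1}s_i+\Delta s_{t+1}}=\frac{\sqrt{m}+1}{2}$ exactly (this is precisely the computation already carried out in the text preceding Theorem~\ref{th:14}). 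Taking the supremum over speed profiles then yields $WAR({\cal P})\geq\frac{\sqrt{m}+1}{2}$.

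There is no real obstacle here: the substance is entirely contained in Theorem~\ref{th:14} and Lemma~\ref{le:15}, and the theorem is a bookkeeping assembly of those. The only point requiring a sentence of care is the second case, where one must note that an instance of $Qm(NP)$ or $Qm(PP)$ with a fixed speed profile is in particular an instance of the general problem with speeds as input, so the worst case over profiles lower-bounds $WAR({\cal P})$; the witness profile $(\sqrt{m}+1,1,\dots,1)$ then does the job because, as shown in the discussion before Theorem~\ref{th:14}, it makes the fixed-speed formula equal to $\frac{\sqrt{m}+1}{2}$.
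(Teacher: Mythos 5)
Your proposal is correct and follows the paper's route exactly: the paper derives Theorem~\ref{th:16} as an immediate consequence of Theorem~\ref{th:14} and Lemma~\ref{le:15}, with no further argument given. Your additional remark for the speeds-as-input case --- that the witness profile $s_1=\sqrt{m}+1$, $s_2=\cdots=s_m=1$ makes the fixed-speed formula equal $\frac{\sqrt{m}+1}{2}$, so the supremum over profiles yields the bound --- merely spells out a detail the paper leaves implicit, and it checks out.
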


\section{Unrelated machines}

Since $Qm$ is a special version of $Rm$, from the results in the
previous section,   the weak simultaneous approximation ratio is at
least $\frac{\sqrt{m}+1}{2}$ for each of $Rm(NP)$, $Rm(PP)$ and
$Rm(FP)$.   The following lemma establishes an upper bound of the
weak simultaneous approximation ratio for the three problems.

\begin{lemma}\label{le:18}
$WAR({\cal P})  \leq \sqrt{m}$ for ${\cal P} \in \{ Rm(NP),
Rm(PP), Rm(FP)\}$.
\end{lemma}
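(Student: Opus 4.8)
Since the non-preemptive mode is the most restrictive of the three (any non-preemptive schedule is also a feasible preemptive and fractional schedule), it suffices to exhibit, for an arbitrary instance $\mathcal{I}$ of $Rm(NP)$, a single non-preemptive schedule $S$ with $s(S) \leq \sqrt{m}$; the same schedule then witnesses the bound for $Rm(PP)$ and $Rm(FP)$ as well. So the plan is to fix an instance $\mathcal{I}$ with jobs $J_1,\dots,J_n$ and machines $M_1,\dots,M_m$, construct a good schedule $S$, and then bound $\sum_{k=1}^i L^S_{\pi(k)}$ from above for every prefix length $i$ while bounding the corresponding prefix sums of an arbitrary competitor schedule $\bar S$ from below, using the load vector $L(\bar S)$ only through the total work it must place and the feasibility constraints.

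The natural candidate for $S$ is a schedule that balances loads as evenly as possible. I would first reduce to the case where $S$ has all loads equal to some common value $\Lambda$: if the instance admits a (fractional-looking) perfectly balanced assignment that is actually realizable non-preemptively up to small job sizes, use it; more carefully, I expect the argument to run through an LPT-type or longest-job-first assignment on the machine minimizing the current completion time, giving a schedule $S$ in which the top loads are controlled by a single large job plus an average term, exactly as in the identical-machines proof (equations (\ref{eq1})–(\ref{eq2}) there). The key quantity is: for any feasible schedule, the $i$-th prefix sum $f_\mathcal{I}(i) = \inf_{\bar S}\sum_{k=1}^i L^{\bar S}_{\pi^{\bar S}(k)}$ of the sorted load vector. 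The crux is to show that the balanced schedule $S$ satisfies $\sum_{k=1}^i L^S_{\pi^S(k)} \leq \sqrt{m}\, f_\mathcal{I}(i)$ for all $i$. Two regimes appear: for small $i$, the competitor can concentrate little work, so $f_\mathcal{I}(i)$ is bounded below by roughly $\frac{i}{m}$ times the total assignable work, against which $S$'s prefix is at most $i \cdot \Lambda$; for large $i$ (close to $m$), $f_\mathcal{I}(i)$ captures essentially the full work that must go somewhere, and $S$'s prefix is close to the same total, so the ratio there is near $1$. Balancing the worst case over $i$ — where one regime gives a factor growing like $\sqrt{m}$ and the other like $i/(\text{something})$ — is where the $\sqrt{m}$ comes out, by the same AM–GM step ($\frac{m}{\vartheta-1} + (\vartheta-1) \geq 2\sqrt{m}$) used in the $Qm(FP)$ analysis.

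The main obstacle is the unrelated-machines structure: unlike identical or related machines, there is no single "total work" — the amount of work a job contributes depends on which machine runs it, and the lower bound $f_\mathcal{I}(i)$ for the competitor cannot be read off from a speed vector. I expect the fix is to fix, once and for all, the assignment realizing the optimal makespan (or a fractional relaxation thereof solvable by LP), use its per-machine load profile as the reference "profile," and argue that (a) this profile can be rounded to a non-preemptive $S$ losing only an additive term bounded by the largest job (negligible after a standard scaling/splitting reduction that makes all jobs small, mirroring the use of $\epsilon$ in Lemma~\ref{le:15}), and (b) any competitor's sorted prefix sums dominate the corresponding prefix sums of this reference profile up to the factor $\sqrt{m}$, which reduces to the deterministic inequality already handled on related machines once the profile is held fixed. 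Assembling these pieces — the rounding lemma, the reduction to small jobs, and the prefix-sum comparison — and checking that the additive errors wash out in the supremum over instances is the technical heart; the final $\sqrt{m}$ is then forced by the same extremal balancing as before.
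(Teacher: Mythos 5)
Your proposal has genuine gaps, both in its reductions and in its core argument. First, the opening reduction is backwards: exhibiting a non-preemptive schedule $S$ with $s(S)\leq\sqrt{m}$ \emph{measured against non-preemptive competitors} does not bound $WAR(Rm(PP))$ or $WAR(Rm(FP))$, because in those modes the set of feasible schedules is strictly larger, so the infimum of the sorted prefix sums can be smaller and the very same schedule $S$ can have a larger value of $s(S)$ there. (A reduction in the opposite direction would work: compare one schedule against all \emph{fractional} competitors; but that is not what you claim.) Second, the rounding step (a) together with a ``scaling/splitting reduction that makes all jobs small'' is not available in $Rm(NP)$: jobs cannot be split, a single job's minimum processing time may dominate the whole instance, and $WAR$ is a purely multiplicative guarantee taken as a supremum over adversarial instances, so an additive loss of the order of the largest job does not wash out. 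The $\epsilon$-device of Lemma~\ref{le:15} is usable only for \emph{lower} bounds, where one gets to choose the instance; here the instance is given.

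More fundamentally, your plan never identifies the quantity that makes the unrelated case work, namely $\sum_j p_{[j]}$ with $p_{[j]}=\min_i p_{ij}$: it lower-bounds the total load of every schedule in every mode, hence every competitor's $i$-th sorted prefix sum is at least $\frac{i}{m}\sum_j p_{[j]}$, and every competitor's largest load is at least the optimal makespan. The paper's proof is then a short dichotomy: if the optimal makespan $L^S_1$ satisfies $L^S_1\leq\frac{1}{\sqrt{m}}\sum_j p_{[j]}$, the makespan-minimizing schedule itself works, since $\sum_{k=1}^i L^S_k\leq iL^S_1\leq\sqrt{m}\cdot\frac{i}{m}\sum_j p_{[j]}$; otherwise the schedule $R$ assigning each job to its fastest machine works, since all of its prefix sums are at most $\sum_j p_{[j]}<\sqrt{m}\,L^S_1\leq\sqrt{m}$ times any competitor's top load. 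Your ``balanced profile'' witness fails precisely in this second regime: on unrelated machines, balancing loads may force slow assignments and inflate the total processed work arbitrarily relative to $\sum_j p_{[j]}$, so its prefix sums are not dominated by any competitor's, and your claim (b) that the comparison ``reduces to the related-machines inequality once the profile is held fixed'' has no justification. As written, the proposal is an outline whose decisive steps are missing and whose two reductions do not hold.
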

\begin{proof}
Let ${\cal I}$ be an instance of $R_m(NP)$, $R_m(PP)$ or $R_m(FP)$.
Let $S$ be a schedule which minimizes the makespan with $L^S_1\geq
L^S_2\geq\cdots\geq L^S_m$. Write $p_{[j]}=\min_{1\leq i\leq
m}\{p_{ij}\}$.

If $L^S_1\leq\frac{\sum^n_{j=1}p_{[j]}}{\sqrt{m}}$, let $T$ be a
feasible schedule with  $L^T_{\pi(1)}\geq
L^T_{\pi(2)}\geq\cdots\geq L^T_{\pi(m)}$ for some permutation
$\pi$ of $\{1,2,\cdots,m\}$. For each $i$, we have
$\sum^i_{k=1}L^S_k\leq
iL^S_1\leq\sqrt{m}\cdot\frac{i}{m}\sum^n_{j=1}p_{[j]}
\leq\sqrt{m}\sum^i_{k=1}L^T_{\pi(k)}$. This means that
$s^{*}({\cal I})\leq\sqrt{m}$.

If $L^S_1>\frac{\sum^n_{j=1}p_{[j]}}{\sqrt{m}}$, let $R$ be  the
schedule in which each job $J_j$  is assigned to the machine $M_i$
with $p_{ij}= p_{[j]}$. Let $O$ be an arbitrarily feasible schedule,
and let ${\pi}_1$ and ${\pi}_2$ be two permutations of
$\{1,2,\cdots,m\}$ such that $L^R_{{\pi}_1(1)}\geq
L^R_{{\pi}_1(2)}\geq\cdots\geq L^R_{{\pi}_1(m)}$ and
$L^O_{{\pi}_2(1)}\geq L^O_{{\pi}_2(2)}\geq\cdots\geq
L^O_{{\pi}_2(m)}$. For each $i$, we have
$\sum^i_{k=1}L^R_{{\pi}_1(k)}
\leq\sum^m_{k=1}L^R_{{\pi}_1(k)}=\sum^n_{j=1}p_{[j]} <\sqrt{m}L^S_1
\leq\sqrt{m}L^O_{{\pi}_2(1)}
\leq\sqrt{m}\sum^i_{k=1}L^{O}_{{\pi}_2(k)}$. This also means that
$s^{*}({\cal I})\leq\sqrt{m}$. The lemma follows.
\end{proof}

Combining with the results of the previous section, we have the
following theorem.

\begin{theo}
For each problem ${\cal P} \in \{Qm(NP), Qm(PP), Qm(FP), Rm(NP),
Rm(PP), Rm(FP)\}$, we have $\frac{\sqrt{m}+1}{2} \leq WAR({\cal
P})  \leq \sqrt{m}$.
\end{theo}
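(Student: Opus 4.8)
The final theorem is essentially a bookkeeping statement: it collects the two-sided bounds for the six related/unrelated machine problems that have already been worked out piecemeal in Sections 3 and 4. So the plan is to assemble the pieces rather than prove anything genuinely new.

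First I would dispose of the lower bounds. For the three related-machine problems $Qm(NP)$, $Qm(PP)$, $Qm(FP)$, Theorem~\ref{th:16} (obtained from Theorem~\ref{th:14} via Lemma~\ref{le:15}) gives $WAR({\cal P}) \geq \frac{\sqrt{m}+1}{2}$ when the speeds are part of the input; and in the fractional case Theorem~\ref{th:14} in fact gives equality. For the three unrelated-machine problems, I would invoke the observation already stated at the start of Section~4: since $Qm$ is the special case of $Rm$ in which $p_{ij}=p_j/s_i$, every instance of $Qm({\cal P})$ is also an instance of $Rm({\cal P})$, hence $WAR(Rm({\cal P})) \geq WAR(Qm({\cal P})) \geq \frac{\sqrt{m}+1}{2}$. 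This takes care of all six lower bounds.

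For the upper bounds, the unrelated cases $Rm(NP)$, $Rm(PP)$, $Rm(FP)$ are exactly the content of Lemma~\ref{le:18}, which gives $WAR({\cal P}) \leq \sqrt{m}$. The related cases then follow for free: since $Qm$ is a restriction of $Rm$, and the upper bound in Lemma~\ref{le:18} was proved for an arbitrary instance of the broader class, the same bound $\sqrt{m}$ holds for $Qm(NP)$, $Qm(PP)$, $Qm(FP)$ as well. (Alternatively, for $Qm(FP)$ one already has the sharper $\frac{\sqrt{m}+1}{2}$ from Theorem~\ref{th:14}, which also lies below $\sqrt{m}$.) Combining, each of the six problems satisfies $\frac{\sqrt{m}+1}{2} \leq WAR({\cal P}) \leq \sqrt{m}$.

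There is really no obstacle here — the only thing to be careful about is making sure the monotonicity direction ``$Qm$ a special case of $Rm$'' is used correctly in each place: it transfers lower bounds \emph{up} from $Q$ to $R$ and transfers upper bounds \emph{down} from $R$ to $Q$, which is exactly what is needed. So the proof is a two-sentence citation of Theorem~\ref{th:14}, Theorem~\ref{th:16}, the opening remark of Section~4, and Lemma~\ref{le:18}.
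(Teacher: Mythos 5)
Your proposal is correct and follows exactly the paper's own (one-line) argument: the paper proves this theorem simply by combining Theorem~\ref{th:14}/Theorem~\ref{th:16} and the opening remark of Section~4 for the lower bounds with Lemma~\ref{le:18} for the upper bounds, which is precisely your assembly. Your explicit note that the specialization ``$Qm \subseteq Rm$'' transfers lower bounds up and upper bounds down is the right (and only) point of care, and it is handled correctly.
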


\section*{Acknowledgments}
The authors would like to thank the associate editor and two
anonymous referees for their constructive comments and kind
suggestions.

\bibliographystyle{plainnat}

\end{document}